\newif\ifappendix\appendixfalse
\newif\ifbw\bwtrue
\def\expandafter\UrlBreaks\expandafter{\UrlBreaks%  save the current one
  \do\a\do\b\do\c\do\d\do\e\do\f\do\g\do\h\do\i\do\j%
  \do\k\do\l\do\m\do\n\do\o\do\p\do\q\do\r\do\s\do\t%
  \do\u\do\v\do\w\do\x\do\y\do\z\do\A\do\B\do\C\do\D%
  \do\E\do\F\do\G\do\H\do\I\do\J\do\K\do\L\do\M\do\N%
  \do\O\do\P\do\Q\do\R\do\S\do\T\do\U\do\V\do\W\do\X%
  \do\Y\do\Z}
\newcommand{\EasyCrypt}{\textsf{EasyCrypt}\xspace}
\newcommand{\Sprhl}{\textsf{pRHL}\xspace}
\newcommand{\rname}[1]{[{\sc #1}]}
\newcommandx{\note}[2][1=]{\todo[inline,linecolor=red,backgroundcolor=red!25,bordercolor=red,#1]{#2}}
\newcommand{\ExpD}{\mathbb{E}}
\newcommand{\dnull}[1][]{{{\mathbb{0}}^{#1}}}
\newcommand{\dunit}[2][]{{{\delta}^{#1}_{#2}}}
\newcommand{\dlet}[3]{\ExpD_{{#1} \sim {#2}} [{#3}]}
\newcommand{\Dist}{\ensuremath{\mathbb{D}}}
\newcommand{\pr}[2]{\mathsf{Pr}_{#1} \left[#2 \right]}
\newcommand{\wt}[1]{|#1|}
\newcommand{\supp}{\mathsf{supp}}
\newcommand{\True}{\ensuremath{\mathop{\top}}}
\newcommand{\ptrue}{1}
\newcommand{\pfalse}{0}
\newcommand{\subst}[2]{[#1 := #2]}
\newcommand{\bernD}{\mathbf{Bern}}
\newcommand{\ZZ}{\mathbb{Z}}
\newcommand{\RR}{\mathbb{R}}
\newcommand{\Zn}[1]{\ZZ/\raisebox{-0.5ex}{${#1}\ZZ$}}
\newcommand{\T}{\mathbf{T}}
\newcommand{\Var}{\mathcal{X}}
\newcommand{\Mem}{\mathbf{State}}
\newcommand{\PMem}{\Dist(\Mem)}
\newcommand{\denot}[1]{\llbracket #1 \rrbracket}
\newcommand{\dsem}[2]{\denot{#2}_{#1}}
\newcommand{\selfcomp}[2]{{#1^{\langle #2 \rangle}}}
\newcommand{\proj}{\pi}
\newcommand{\kw}[1]{\ensuremath{\mathbf{#1}}}
\newcommand{\kif}{\kw{if}}
\newcommand{\kthen}{\kw{then}}
\newcommand{\kelse}{\kw{else}}
\newcommand{\kwhile}{\kw{while}}
\newcommand{\kfor}{\kw{for}}
\newcommand{\kdo}{\kw{do}}
\newcommand{\kto}{\kw{to}}
\newcommand\hoare[3]{\{#1\}\;#2\;\{#3\}}
\newcommand{\rnd}{%
  \stackrel{\raisebox{-.15ex}[.25ex]{\tiny %
  $\mathdollar$}}{\raisebox{-.2ex}[.2ex]{$\leftarrow$}}}
\newcommand{\asn}{%
  \stackrel{}{\raisebox{-.1ex}[.2ex]{$\leftarrow$}}}
\newcommand{\iass}[2]{#1 \leftarrow #2}
\newcommand{\irnd}[2]{#1 \rnd #2}
\newcommand{\icondT}[2]{\kif\ {#1}\ \kthen\ {#2}}
\newcommand{\icond}[3]{\kif\: #1\: \kthen\: #2\: \kelse\: #3}
\newcommand{\iwhile}[2]{\kwhile\: #1 \: \kdo \: #2}
\newcommand{\ifor}[4]{\kfor\: #1 \: = #2\: \kto \: #3\: \kdo \: #4}
\newcommand{\iforin}[3]{\kfor\: {#1} \in {#2}\: \kdo \: {#3}}
\newcommand{\iskip}{\kw{skip}}
\newcommand{\iabort}{\kw{abort}}
\newcommand{\itfor}[2]{{#1}^{[#2]}}
\newcommand{\expr}{\mathcal{E}}
\newcommand{\dexpr}{\mathcal{D}}
\newcommand{\eqdef}{\mathrel{\stackrel{\scriptscriptstyle \triangle}{=}}}
\newcommand\q{[\![}
\newcommand\p{]\!]}
\newcommand{\sidel}{\langle 1\rangle}
\newcommand{\sider}{\langle 2\rangle}
\newcommand{\Equiv}[4]{%
  \vDash {#1} \sim {#2}~:~ {#3} \Longrightarrow ~{#4} }
\newcommand{\pre}{\Phi}
\newcommand{\post}{\Psi}
\newcommand{\EqMem}{\mathsf{EqMem}}
\newcommand{\Eqmem}[2]{\EqMem^{\langle{#1}\rangle, \langle{#2}\rangle}}
\newcommand{\Small}{\fontsize{8.2pt}{8.4pt}\selectfont}
\newcommand*{\LSTfont}{\Small\ttfamily\SetTracking{encoding=*}{-60}\lsstyle}
\newcommand{\lstrnd}{\stackrel{\raisebox{-.15ex}{\ensuremath{\scriptscriptstyle\$}}}{\raisebox{-.2ex}{\ensuremath{\leftarrow}}}}
\newcommand{\coupledsupp}[4]{%
  {#1} \blacktriangleleft_{#4} \langle {#2} \mathrel{\&} {#3} \rangle}
\newcommand{\eqsem}[3]{#1 \vdash #2 \equiv #3}
\let\xinferrule\inferrule
\renewcommand{\inferrule}[3][]{%
  \ifx&#1&
    \xinferrule*{#2}{#3}
  \else
    \xinferrule*[right=\rname{#1}]{#2}{#3}
  \fi}
\theoremstyle{plain}% default
\newtheorem{theorem}{Theorem}
\newtheorem{prop}[theorem]{Proposition}
\newtheorem{lemma}[theorem]{Lemma}
\newtheorem{fact}[theorem]{Fact}
\newtheorem{definition}[theorem]{Definition}
\newtheorem{example}[theorem]{Example}
\begin{document}

\title{Proving uniformity and independence \\ by self-composition and coupling}

\titlerunning{Uniformity and independence by couplings}
\author{Gilles Barthe\inst{1} \and Thomas Espitau\inst{2} \and 
  Benjamin Grégoire\inst{3} \\ Justin Hsu\inst{4} \and
  Pierre-Yves Strub\inst{5}}
\authorrunning{G. Barthe, T. Espitau, B. Grégoire, J. Hsu, P.-Y. Strub}
\institute{ $\mbox{}^1$ IMDEA Software Institute \qquad
  $\mbox{}^2$ Sorbonne Universit\'es, UPMC Paris 6 \qquad\\
  $\mbox{}^3$ Inria \qquad
  $\mbox{}^4$ University of Pennsylvania \qquad
  $\mbox{}^5$ École Polytechnique}
\maketitle
\begin{abstract}
\emph{Proof by coupling} is a classical proof technique for establishing
probabilistic properties of two probabilistic processes, like stochastic dominance and rapid mixing of
Markov chains. More recently, couplings have been investigated as a useful abstraction
for formal reasoning about relational properties of probabilistic
programs, in particular for modeling reduction-based cryptographic
proofs and for verifying differential privacy. In this paper, we
demonstrate that probabilistic couplings can be used for verifying
\emph{non-relational} probabilistic properties.  Specifically, we show that the
program logic \Sprhl---whose proofs are formal versions of proofs by
coupling---can be used for formalizing uniformity and probabilistic
independence. We formally verify our main examples using the
\textsf{EasyCrypt} proof assistant.
\end{abstract}
\section{Introduction}

Uniformity and probabilistic independence are two of the most useful and
commonly encountered properties when analyzing randomized computations. Uniform
distributions are a central building block of randomized algorithms. Arguably
the simplest non-trivial distribution---the coin flip---is a uniform distribution
over two values. Given access to uniform samples, there are known
transformations for converting the samples to simulate more complex
distributions, like Gaussian or Laplacian distributions. Conversely, turning 
samples from various non-uniform distributions into uniform
samples is an active area of research.

Probabilistic independence is no less useful. The probability of a conjunction
of independent events can be decomposed as a product of probabilities of
individual events, each which can then be analyzed in isolation.  Independent
random variables are also needed to apply more sophisticated mathematical tools,
like concentration inequalities. 

Given these and other applications, it is not surprising that
researchers have investigated different methods of reasoning about uniformity
and independence. For instance, \citet{PearlP86} develop
an axiomatic theory based on \emph{graphoids} for modeling conditional
independence in probability theory. However, proving uniformity and
independence by program verification remains a challenging task. Most
verification techniques for probabilistic programs do not treat these
properties as first-class assertions, and rely on reasoning principles
that are cumbersome to use. Often, the only way to prove uniformity or
independence is to prove exact values for the probability of specific events.
% \gb{this explanation could be improved and the next paragraph skipped}
% \jh{I tried. Think it's still better to keep.}

For example, consider a formal system for proving properties of the form
$\pr{\dsem{m}{s}}{E}=p$, which capture the fact that the event $E$ has
probability $p$ in the distribution obtained by executing the randomized program
$s$ on some initial memory $m$ (many existing systems use this idea,
e.g.~\citep{Kozen85,Morgan96,Ramshaw79,Hartog:thesis,RandZ15,ChadhaCMS07}).
Suppose that we want to prove that a program variable $x$ of some finite type
$A$ is uniformly distributed in the output distribution $\dsem{m}{s}$. The only
way to show this property is to analyze the probability of each output: for
every $a\in A$, prove that $\pr{\dsem{m}{s}}{x=a}=\frac{1}{|A|}$.

For independence, the situation is similar. Assume
that we want to prove that the two program variables $x$ and $y$ of
respective types $A$ and $B$ are (probabilistically) independent
in the output distribution $\dsem{m}{s}$. This can be done by
exhibiting functions $f,g,h$ such that for every $a\in A$ and $b\in
B$, we have: $\pr{\dsem{m}{s}}{x=a}=f(a)$, $\pr{\dsem{m}{s}}{y=b} =
g(b)$, $\pr{\dsem{m}{s}}{x=a\land y=b}=h(a,b)$. Then, independence
between $x$ and $y$ holds by proving that $h(a,b)=f(a)\cdot g(b)$ for every $a\in
A$ and $b\in B$.

While these approaches work in theory, they can be laborious in practice. It
may be awkward to express the probability of $x = a$, and the
functions $f$, $g$ and $h$ may be difficult to produce. The main
contribution of this paper is an alternative method based on
probabilistic couplings for proving uniformity and independence.
Probabilistic couplings are a classical method for proving
sophisticated probabilistic properties (e.g., stochastic dominance,
rapid mixing of Markov chains, and
more~\citep{Thorisson00,Lindvall02,Villani08}). More recently,
couplings have been used to reason about relational properties of
probabilistic programs, notably differential
privacy~\citep{BEGHSS15,BGGHS16}. Here we show that uniformity and
independence properties can also be verified using coupling, despite
being \emph{non-relational} properties. As a consequence, our verification
method inherits the many advantages of reasoning by couplings:
compositional reasoning, and no need to reason directly about
probabilistic events. Concretely, we show how uniformity and independence can be
captured in the relational program logic \Sprhl~\citep{BartheGZ09}.

In summary, our main contributions are novel methods to prove
uniformity and independence properties of probabilistic programs. We
prove the soundness of the methods and demonstrate their usefulness on
a class of case studies.

\subsection*{Detailed Contributions}

\paragraph*{Uniformity.}
Suppose we have a program $s$ with a program variable $x$ ranging over
a finite set $A$, and we want to show that $x$ is distributed
uniformly over $A$ after executing $s$. Rather than computing the
probability of $\pr{\dsem{m}{s}}{x = a}$ for each $a \in A$, it
suffices to show that the probabilities of any two outputs are equal:
\[
  \forall a_1, a_2 \in A.\ \pr{\dsem{m}{s}}{x = a_1} =
  \pr{\dsem{m}{s}}{x = a_2}.
\]
Now, we can view uniformity as a relational property: if we consider
two runs of $s$, then the probability of $x$ being $a_1$ in the first
run should be equal to the probability of $x$ being $a_2$ in the
second run. In \Sprhl, this property is described by the following
judgment:
\[
  \forall a_1, a_2 \in A.\ \Equiv{s}{s}{\phi}{x\sidel = a_1 \iff x\sider=a_2}
\]
where the assertion $\phi$ asserts that the initial states are equal.

\paragraph*{Independence.}
Proving probabilistic independence is more involved. We show how to prove
independence in two different ways. Assume that we want to prove that
the program variables $x$ and $y$ of respective finite types $A$ and
$B$ are independent. First, if the distribution of $\langle
x,y\rangle$ is uniformly distributed over $A\times B$, then $x$ and
$y$ are independent (and are themselves uniformly
distributed). Indeed, assume that for all $a\in A$ and $b\in B$ we
have $\pr{\dsem{m}{s}}{x = a \land y = b} = \frac{1}{|A|\cdot
  |B|}$. Then we have $\pr{\dsem{m}{s}}{x = a} = \sum_{b\in B}
\pr{\dsem{m}{s}}{x = a \land y = b} = \frac{1}{|A|}$. A similar
argument applies to the probability that $y=b$, from which
independence follows. Thus, our first method of proving independence
is by reduction to proving uniformity.

This approach is simple to use, but it only applies to proving independence of
uniform random variables.
A more expressive, but also slightly more complicated approach is to
express probabilistic independence as a property of a
modified version of the program, without any requirement on
uniformity. More specifically, independence of $x$ and $y$ can be
derived from the equality between the probabilities of $x=a\land y=b$
and $x_1=a \land y_2=b$, where in the first case the probability is
taken over the output of the original program $s$, and in the second case the
probability is taken over the output of the program $s_1;s_2$, where
$s_1$ and $s_2$ are renamings of $s$ (we call $s_1;s_2$ a
\emph{self-composition} of $s$~\citep{BartheDR04,DarvasHS05}). The
reason is not hard to see. Since the composed programs operate on
disjoint memory, the final combined output distribution models two
independent runs of the original program $s$. So, the probability
$\pr{\dsem{m_1\uplus m_2}{s_1;s_2}}{x_1=a \land y_2=b}$---where $m_1\uplus m_2$ is the
disjoint union of two copies of $m$---is
equal to the product of $\pr{\dsem{m_1}{s_1}}{x_1=a}$ and
$\pr{\dsem{m_2}{s_2}}{y_2=b}$. Since $s_1$ and $s_2$ are just renamed versions
of the
original program $s$, these probabilities are in turn equal to
$\pr{\dsem{m}{s}}{x=a}$ and $\pr{\dsem{m}{s}}{y=b}$ in the original
program.

Our encoding casts independence as a relational property between a
program $s$ and its self-composition $s_1;s_2$, a property which can
be directly expressed in \Sprhl:
\[
   \forall a \in A, b \in B.\
   \Equiv{s}{s_1;s_2}{\phi}{(x\sidel = a\land y\sidel = b)
     \iff (x_1\sider=a \land y_2\sider=b)}
\]
where the precondition $\phi$ captures the initial conditions. We show
that our approach extends to independence and conditional independence
of sets of program variables.

\paragraph*{Outline}
\cref{sec:background} and \cref{sec:setting} provide the
relevant mathematical background and introduce the setting of our
work. \cref{sec:unif},~\cref{sec:indep} and~\cref{sec:cindep}
respectively address the case of uniformity, independence, and
conditional independence. In each case we demonstrate our method using
classic examples of randomized algorithms. We conclude
the paper with a discussion of alternative techniques for verifying
these properties.

\section{Mathematical Background}\label{sec:background}
For the sake of simplicity, we restrict ourselves to discrete (countable)
sub-distributions.

\begin{definition}
  A \emph{sub-distribution} over a set $A$ is defined by a mass function
  $\mu : A \to \RR^+$, which gives the probability of the unitary events $a \in
  A$. This mass function must be s.t.  $\sum_{a \in A} \mu(a)$ is well-defined
  and its \emph{weight} satisfies
  $\wt{\mu} \eqdef \sum_{a\in A} \mu(a) \leq 1$.
  In particular, the \emph{support} of the sub-distribution
  $\supp(\mu) \eqdef \{ a \in A \mid \mu(a) \neq 0 \}$
  is discrete.
  When $\wt{\mu}$ is equal to $1$, we call $\mu$ a \emph{distribution}.  We let
  $\Dist(A)$ denote the set of sub-distributions over $A$.
  An \emph{event} over $A$ is a predicate over $A$. The probability of an
  event $E$ in a sub-distribution $\mu$, written $\pr{x \sim \mu}{E}$,
  is defined as
  $\sum_{\{x \in A \mid E(x)\}} \mu(x)$.
\end{definition}
When working with sub-distributions over tuples, the probabilistic
versions of the usual projections on tuples are called
\emph{marginals}. For distributions over pairs, we define the
\emph{first} and \emph{second marginals} $\proj_1(\mu)$ and
$\proj_2(\mu)$ of a distribution $\mu$ over $A\times B$ by $\proj_1
(\mu)(a)\eqdef\sum_{b\in B} \mu(a,b)$ and $\proj_2 (\mu)(b)\eqdef\sum_{a\in A}
\mu(a,b)$.  
We are now ready to formally define coupling.
\begin{definition}
  Let $A_1$ and $A_2$ be two sets, and let $\post\subseteq A_1\times
  A_2$. A $\post$-\emph{coupling} for two sub-distributions $\mu_1, \mu_2$
  resp.\ over $A_1$ and $A_2$ is a sub-distribution $\mu \in \Dist(A_1 \times
  A_2)$ such that $\proj_1(\mu)=\mu_1$ and $\proj_2(\mu)=\mu_2$ and
  $\supp (\mu) \subseteq\post$. We write $\coupledsupp {}{\mu_1}
  {\mu_2} \post$ to denote the existence of a $\post$-coupling.
\end{definition}
In addition to the general definition, we shall also consider a
special case of coupling: specifically, we say that $(\mu_1,\mu_2)$
are $f$-\emph{coupled} if $f:A_1\rightarrow A_2$ is a bijection such that
$\mu_1(x)=\mu_2(f(x))$ for every $x\in A_1$. In this case, we write
$\coupledsupp f {\mu_1} {\mu_2} {}$.

Previous works establish a number of basic facts about couplings, see
e.g.~\citet{BartheGZ09,BEGHSS15}, In particular, one useful consequence of
couplings is that they can show that one event has smaller probability than
another.

\begin{lemma}[Fundamental lemma of coupling] \label{l:cplfd}
Let $E_1$ and $E_2$ be predicates over $A_1$ and $A_2$, and let $\post
\eqdef \{ (x_1,x_2) \mid (x_1\in E_1) \Rightarrow (x_2\in E_2) \}$. If
$\coupledsupp {}{\mu_1} {\mu_2} \post$, then $\pr{x_1 \sim \mu_1}{E_1}
\leq \pr{x_2 \sim \mu_2}{E_2}$.
\end{lemma}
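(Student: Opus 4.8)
The plan is to prove this directly from the definition of a coupling by a short computation that compares the two marginal probabilities. The key observation is that the support condition $\supp(\mu)\subseteq\post$ means that every pair $(x_1,x_2)$ receiving positive mass under $\mu$ satisfies the implication $(x_1\in E_1)\Rightarrow(x_2\in E_2)$. In particular, whenever $x_1\in E_1$ and $\mu(x_1,x_2)>0$, we must have $x_2\in E_2$. This is precisely what lets us transport an event on the first marginal into an event on the second.

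First I would rewrite $\pr{x_1\sim\mu_1}{E_1}$ using the fact that $\mu_1=\proj_1(\mu)$. By the definition of the first marginal, $\mu_1(x_1)=\sum_{x_2\in A_2}\mu(x_1,x_2)$, so
\[
  \pr{x_1\sim\mu_1}{E_1}
  = \sum_{\{x_1\mid E_1(x_1)\}} \mu_1(x_1)
  = \sum_{\{x_1\mid E_1(x_1)\}} \sum_{x_2\in A_2} \mu(x_1,x_2)
  = \sum_{\{(x_1,x_2)\mid E_1(x_1)\}} \mu(x_1,x_2).
\]
The next step is to restrict the summation index set. Since $\mu(x_1,x_2)=0$ whenever $(x_1,x_2)\notin\post$, the only pairs contributing to the sum with $E_1(x_1)$ true are those additionally in $\post$, and for such pairs the defining implication of $\post$ forces $E_2(x_2)$. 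Hence the index set $\{(x_1,x_2)\mid E_1(x_1)\}$ can, up to terms of mass zero, be replaced by a subset of $\{(x_1,x_2)\mid E_2(x_2)\}$.

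Therefore the sum is bounded above by the sum over all pairs satisfying $E_2(x_2)$, using nonnegativity of $\mu$ (recall $\mu:A_1\times A_2\to\RR^+$) to add back the discarded nonnegative terms:
\[
  \sum_{\{(x_1,x_2)\mid E_1(x_1)\}} \mu(x_1,x_2)
  \leq \sum_{\{(x_1,x_2)\mid E_2(x_2)\}} \mu(x_1,x_2)
  = \sum_{\{x_2\mid E_2(x_2)\}} \proj_2(\mu)(x_2)
  = \pr{x_2\sim\mu_2}{E_2},
\]
where the middle equality collapses the second marginal $\proj_2(\mu)=\mu_2$. Chaining these (in)equalities yields $\pr{x_1\sim\mu_1}{E_1}\leq\pr{x_2\sim\mu_2}{E_2}$, as desired.

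I do not expect any serious obstacle here: the result is a clean consequence of unfolding the three coupling conditions. The only point requiring a little care is the index-set manipulation in the inequality step—making precise that discarding pairs outside $\post$ is justified by their zero mass, and that adding back pairs with $E_2(x_2)$ but not $E_1(x_1)$ only increases the sum by nonnegative amounts. Since everything is a countable sum of nonnegative reals with well-defined total mass, there are no convergence subtleties, and Tonelli/rearrangement of the double sum is unconditionally valid.
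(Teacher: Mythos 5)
Your proof is correct. Note that the paper itself does not prove this lemma: it states it as a known fact and defers to prior work (the citations to \citet{BartheGZ09,BEGHSS15}), so there is no in-paper argument to compare against. Your argument is the standard one those works use: rewrite $\pr{x_1 \sim \mu_1}{E_1}$ via the first marginal as a double sum over pairs, use $\supp(\mu)\subseteq\post$ to conclude that every pair with positive mass and $E_1(x_1)$ also satisfies $E_2(x_2)$, and then enlarge the index set to all pairs satisfying $E_2(x_2)$, collapsing to the second marginal. The two points you flag---that discarding pairs outside $\post$ is licensed by their zero mass, and that enlarging the index set only adds nonnegative terms---are exactly the steps that need justification, and your appeal to Tonelli for the countable nonnegative double sum disposes of any convergence concern. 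The proof is complete as written.
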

One can immediately derive a variant of the lemma where $\iff$ and
$=$ are used in place of $\Rightarrow$ and $\leq$ respectively.  The following
lemma provides a converse to the fundamental lemma of coupling in the special
case where we are interested in proving the equality of two distributions.
\begin{lemma} \label{l:cpleq}
For every $\mu_1,\mu_2\in\Dist(A)$, the following are equivalent:
\begin{itemize}
\item $\mu_1 =\mu_2$;
\item for every $a\in A$, $\pr{x\sim \mu_1}{x=a} = \pr{x\sim \mu_2}{x=a}$;
\item for every $a\in A$, $\coupledsupp {}{\mu_1} {\mu_2} {\post_a}$ where
  $\post_a \eqdef \{ (x_1,x_2) \mid x_1=a \iff x_2=a \}$;
\item $\coupledsupp {}{\mu_1} {\mu_2} {\post_A}$ where
  $\post_A \eqdef \{ (x_1,x_2) \mid x_1= x_2 \}$.
\end{itemize}
\end{lemma}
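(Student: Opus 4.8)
The plan is to prove the four statements pairwise equivalent by closing a cycle of implications, say $(1) \Rightarrow (4) \Rightarrow (3) \Rightarrow (2) \Rightarrow (1)$, numbering the items $(1)$--$(4)$ in the order listed. Two of the links are purely definitional: by the definition of the probability of an event, $\pr{x \sim \mu}{x = a} = \sum_{\{x \mid x = a\}} \mu(x) = \mu(a)$, so both $(1)$ and $(2)$ just assert the pointwise equality of the mass functions $\mu_1$ and $\mu_2$. This gives $(2) \Rightarrow (1)$ immediately (and the reverse as well), so the real content lies in connecting the coupling statements $(3)$ and $(4)$ to the equalities.

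For $(1) \Rightarrow (4)$ I would exhibit the diagonal coupling. Writing $\mu \eqdef \mu_1 = \mu_2$, define $\mu^\Delta \in \Dist(A \times A)$ by $\mu^\Delta(a,a) \eqdef \mu(a)$ and $\mu^\Delta(a,b) \eqdef 0$ when $a \neq b$. This is a genuine sub-distribution since $\sum_a \mu^\Delta(a,a) = \sum_a \mu(a) = \wt{\mu} \le 1$. A one-line computation gives $\proj_1(\mu^\Delta)(a) = \sum_b \mu^\Delta(a,b) = \mu(a) = \mu_1(a)$, and symmetrically $\proj_2(\mu^\Delta) = \mu_2$, while $\supp(\mu^\Delta)$ lies on the diagonal $\post_A = \{(x_1,x_2) \mid x_1 = x_2\}$ by construction. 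Hence $\mu^\Delta$ witnesses $\coupledsupp{}{\mu_1}{\mu_2}{\post_A}$.

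The remaining links are short. For $(4) \Rightarrow (3)$ it suffices to weaken the support constraint: for any fixed $a \in A$ we have $\post_A \subseteq \post_a$, since $x_1 = x_2$ trivially forces $x_1 = a \iff x_2 = a$; thus the witness from $(4)$ is also a $\post_a$-coupling, its marginals being unchanged and its support contained in $\post_A \subseteq \post_a$. For $(3) \Rightarrow (2)$ I would invoke the variant of the fundamental lemma of coupling noted just after \cref{l:cplfd}. Fixing $a$ and taking $E_1 = E_2 = \{x \mid x = a\}$, the relation $\post_a = \{(x_1,x_2) \mid x_1 = a \iff x_2 = a\}$ is exactly the $\iff$-version of the predicate in that lemma, so the existence of a $\post_a$-coupling yields $\pr{x \sim \mu_1}{x = a} = \pr{x \sim \mu_2}{x = a}$; since $a$ is arbitrary this is $(2)$. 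I expect no real obstacle here: each step is a few lines, the only substantive appeal to prior machinery being the fundamental lemma in $(3) \Rightarrow (2)$, and the one point deserving care is verifying that the diagonal construction in $(1) \Rightarrow (4)$ really defines a sub-distribution with the correct marginals.
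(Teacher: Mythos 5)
Your proof is correct, but there is nothing in the paper to compare it against: the paper states this lemma in its background section as a known fact about couplings (with pointers to prior work such as \citet{BartheGZ09,BEGHSS15}) and gives no proof. Your argument is the natural self-contained one, and it uses exactly the ingredients the paper sets up. The identification of items (1) and (2) is indeed definitional, since $\pr{x\sim\mu}{x=a}=\mu(a)$. The diagonal coupling $\mu^\Delta$ for $(1)\Rightarrow(4)$ has the correct weight, marginals, and support, as you verify. The step $(4)\Rightarrow(3)$ is just monotonicity of the support condition, via $\post_A\subseteq\post_a$ for every $a$. Finally, $(3)\Rightarrow(2)$ is precisely an application of the $\iff$/$=$ variant of \cref{l:cplfd} that the paper announces in the sentence immediately preceding the lemma, with $E_1=E_2=\{x\mid x=a\}$, so your cycle $(1)\Rightarrow(4)\Rightarrow(3)\Rightarrow(2)\Rightarrow(1)$ closes and establishes all four equivalences. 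This is almost certainly the derivation the authors had in mind, given how they ordered the surrounding text.
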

We note that the third item (existence of liftings for pointwise
equality) is often easier to establish than the last item (existence
of lifting for equality), since one can choose the coupling for each
possible value of $a$, rather than showing a single coupling for all
values of $a$.

%% When working with distributions (and not sub-distributions), it is
%% sometimes useful to use an asymmetric variant of this lemma.
%% \begin{lemma}
%% Let $\mu_1,\mu_2\in\Dist(A)$. Assume that $\wt{\mu_1}=1$. Then the
%% following are equivalent:
%% \begin{itemize}
%% \item $\mu_1 =\mu_2$;
%% \item for every $a\in A$, $\pr{x\sim \mu_1}{x=a} \leq \pr{x\sim \mu_2}{x=a}$;
%% \item for every $a\in A$, $\coupledsupp {}{\mu_1} {\mu_2} {\post_a}$ where
%%   $\post_a \eqdef \{ (x_1,x_2) \mid x_1=a \Rightarrow x_2=a \}$.
%% \end{itemize}
%% \end{lemma}

\section{Setting}\label{sec:setting}
% \subsection{Programming language}
We will work with a simple probabilistic imperative language. Probabilistic
assignments are of the form $x \rnd g$, which assigns a value sampled
according to the distribution $g$ to the program variable $x$. The
syntax of statements is defined by the grammar:
\begin{align*}
    s &::= \iskip
           \mid \iabort
           \mid \iass x  e
           \mid \irnd x  g 
           \mid s; s
           \mid \icond{e}{s}{s}
           \mid \iwhile{e}{s}
\end{align*}
where $x$, $e$ and $g$ respectively range over (typed) variables in
$\Var$, expressions in $\expr$ and distributions in $\dexpr$. To ensure that the
set of states is countable, we require that there are finitely many variables
$\Var$.  As usual $\expr$ is defined inductively from $\Var$ and a set
$\mathcal{F}$ of simply typed function symbols. In this paper, distributions
used for sampling are either uniform distributions over a finite type $A$, or
the Bernoulli distribution with parameter $p$, which we denote by
$\bernD(p)$. We assume that expressions and statements are typed in the
usual way. 

We assume we are given a set-theoretical interpretation for every type and
operator of the language. We define a state as a type-preserving
mapping from variables to values, and we let $\Mem$ denote the set of
states. The set of states is equipped with the usual functions for
reading and writing a value; we use $m(x)$ to denote the value of $x$
in $m$, and $m\subst{x}{v}$ to denote state update, in this case the
state obtained from $m$ by updating the value of $x$ with $v$.

One can equip $\PMem$ with a monadic structure, using the Dirac
distributions $\dunit{x}$ for the unit and \emph{distribution
  expectation} $\dlet x \mu {M(x)}$ for the bind, where
\begin{equation*}
 \dlet x \mu {M(x)} : x \mapsto \sum_{a} \mu(a) \cdot M(a)(x).
\end{equation*}
The semantics of expressions and distribution expressions is
parametrized by a state $m$, and is defined in the usual way where we
require all distribution expressions to be interpreted as proper
distributions (sub-distributions with weight $1$).
\begin{definition}[Semantics of statements] \strut
  \begin{itemize}
  \item The semantics $\dsem{m}{s}$ of a statement $s$ w.r.t.\ to
    some initial state $m$ is a sub-distribution over states, and is
    defined by the clauses of \cref{fig:semantics}.

  \item The (lifted) semantics $\dsem{\mu}{s}$ of a statement $s$
    w.r.t.\ to some initial sub-distribution $\mu$ over states is a
    sub-distribution over states, and is defined as
    $\dsem{\mu}{s}\triangleq \dlet m \mu {\dsem{m}{s}}$ $\mu \in
    \PMem$.
  \end{itemize}
\end{definition}  
A basic and highly important property of probabilistic programs is termination.
We say that a program $s$ is \emph{lossless} if for every initial memory $m$,
$\wt{\dsem{m}{s}}=1$. By now, there are many sophisticated techniques for
proving losslessness even for languages that allow both probabilistic sampling
and non-determinism (including recent advances
by~\citet{ferrer2015probabilistic,DBLP:journals/corr/ChatterjeeFNH15,DBLP:journals/corr/ChatterjeeNZ16}).
These techniques are capable of showing losslessness for all of our examples (in
some cases with a high degree of automation), so throughout the paper, we assume
that all programs are lossless. This assumption is used in the rules of \Sprhl
and the characterizations of uniformity and independence.

\begin{figure*}
\[\begin{array}{c}
\begin{aligned}
  \dsem{m}{\iskip} &= \dunit{m} &
  \dsem{m}{\iabort} &= \dnull \\
  \dsem{m}{x \asn e} &= \dunit{m\subst{x}{\dsem{m}{e}}} &
  \dsem{m}{x \rnd g} &= \dlet v {\denot{g}{m}} {\dunit{m[x:=v]}}
\end{aligned} \\[1.5em]
\begin{aligned}
  \dsem{m}{s_1; s_2} &= \dlet {\xi} {\dsem{m}{s_1}} {\dsem{\xi}{s_2}} \\
  \dsem{m}{\icond{e}{s_1}{s_2}} &=
    \text{if $\dsem{m}{e}$ then $\dsem{m}{s_1}$ else $\dsem{m}{s_2}$} \\
    \dsem{m}{\iwhile{b}{s}} &= \lim_{n \to \infty}\ \dsem{m}{
    \itfor{(\icondT{b}{s})}{n}; \icondT{b}{\iabort}}
\end{aligned}
\end{array}\]
where $\itfor{s}{n} \triangleq \overbrace{s;\ldots;s}^{n~\mbox{times}}$.
\caption{\label{fig:semantics} Denotational semantics of programs}
\end{figure*}

\subsection{Self-Composition of Programs}
For every program $s$ and $n\in\mathbb{N}$, we let $\selfcomp{s}{n}$
denote the $n$-fold self-composition of $s$, i.e.\, $\selfcomp{s}{n}
\eqdef s_1; \ldots, s_n$, where each $s_\imath$ is a copy of $s$ where
all variables are tagged with a superscript $\imath$. In order to
state the main property of self-composition, we define the
self-composition of a state; given a state $m$, we define its $n$-fold
self-composition $\selfcomp{m}{n}$ as the state from
$\selfcomp{\mathcal{X}}{n}$ to values, where
$\selfcomp{\mathcal{X}}{n} \eqdef \{ x^\imath \mid x\in\mathcal{X},
1\leq \imath \leq n \}$ such that for every $x$ and $\imath$,
$\selfcomp{m}{n}(x^\imath)\eqdef m(x)$.  Given a state $m$ from
$\selfcomp{\mathcal{X}}{n}$, we denote by $m_\imath$ the $\imath$-th
projection of $m$.

% \te{What about $x^{(i)}$ instead of $x^i$ ? It's confusing in the exemples, 
% since we are also using powering for some...}

\begin{prop}\label{prop:self-comp}
For every program $s$ and state $m$, we have
$$\pr{\dsem{\selfcomp{m}{n}}{\selfcomp{s}{n}}}{\wedge_{1\leq \imath
    \leq n} E^\imath_\imath} =\prod_{1\leq \imath \leq
  n}\pr{\dsem{m}{s}}{E_\imath}$$ where the event $E^\imath$ is defined
by $E^\imath (\selfcomp{m'}{n})\eqdef E(\pi_\imath(m'))$ for every
$\imath$ and $\pi_\imath$ is the projection from a self-composed state
to its $\imath$-th component.
\end{prop}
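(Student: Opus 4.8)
The plan is to prove the statement by induction on $n$, using the compositional structure of the self-composition $\selfcomp{s}{n} = s_1; \ldots; s_n$ together with the denotational semantics of sequential composition. The central observation is that the copies $s_1, \ldots, s_n$ operate on \emph{disjoint} sets of variables (since $s_\imath$ tags every variable with superscript $\imath$), so that executing them in sequence produces a product distribution over the disjoint variable blocks.

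First I would establish the base case $n=1$, which is immediate since $\selfcomp{s}{1}$ is just a renamed copy of $s$, $\selfcomp{m}{1}$ the corresponding renamed initial state, and the event $E^1_1$ reduces to $E$ evaluated on the single component; thus both sides equal $\pr{\dsem{m}{s}}{E}$. For the inductive step, I would write $\selfcomp{s}{n+1} = \selfcomp{s}{n}; s_{n+1}$ and unfold the semantics of sequential composition via the monadic bind:
\[
  \dsem{\selfcomp{m}{n+1}}{\selfcomp{s}{n+1}}
  = \dlet {\xi} {\dsem{\selfcomp{m}{n}}{\selfcomp{s}{n}}} {\dsem{\xi}{s_{n+1}}}.
\]
The key structural fact I would isolate as a lemma is that because $s_{n+1}$ only reads and writes variables tagged with $n+1$, its execution is independent of the first $n$ blocks: the distribution $\dsem{\xi}{s_{n+1}}$ factors so that the marginal on the first $n$ blocks is the point mass at $\xi$ restricted to those blocks, while the marginal on block $n+1$ depends only on $\xi_{n+1} = m$ and equals (a renaming of) $\dsem{m}{s}$. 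This lets me rewrite the joint probability of $\wedge_{1 \leq \imath \leq n+1} E^\imath_\imath$ as a sum that factors into the first-$n$ part and the $(n+1)$-th part.

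Concretely, I would compute
\[
  \pr{\dsem{\selfcomp{m}{n+1}}{\selfcomp{s}{n+1}}}{\textstyle\bigwedge_{\imath \leq n+1} E^\imath_\imath}
  = \pr{\dsem{\selfcomp{m}{n}}{\selfcomp{s}{n}}}{\textstyle\bigwedge_{\imath \leq n} E^\imath_\imath}
  \cdot \pr{\dsem{m}{s}}{E_{n+1}},
\]
where the factorization is justified by the disjointness of variables: the event $\bigwedge_{\imath \leq n} E^\imath_\imath$ constrains only the first $n$ blocks and is fixed by $\xi$ before $s_{n+1}$ runs, whereas $E^{n+1}_{n+1}$ constrains only block $n+1$ and is determined by the fresh run of $s_{n+1}$. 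Applying the induction hypothesis to the first factor and the losslessness assumption (so that the summed-out probabilities over intermediate states genuinely sum to the marginal weights of $1$) then yields $\prod_{\imath \leq n+1} \pr{\dsem{m}{s}}{E_\imath}$, closing the induction.

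The main obstacle I anticipate is making the "disjointness implies independence" step fully rigorous at the level of the denotational semantics: I need to verify that running $s_{n+1}$ does not touch the variables of the earlier blocks, so that the intermediate state $\xi$ passes its first-$n$ components through unchanged while $\dsem{\xi}{s_{n+1}}$ acts as a product. This amounts to a small structural lemma on the semantics of statements stating that $\dsem{\xi}{s_\imath}$ preserves all variables outside $\selfcomp{\Var}{}$'s $\imath$-th block and that its effect on the $\imath$-th block depends only on $\xi_\imath$; this is proved by a routine induction on the statement $s$, but it is the technical heart of the argument. The losslessness hypothesis is also essential here, since without weight $1$ the marginalization identities used to factor the sum would only give inequalities rather than the exact product.
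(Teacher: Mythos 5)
Your proof is correct, but there is nothing in the paper to compare it against: the paper states \cref{prop:self-comp} with no proof at all, treating it as a standard fact about composed programs acting on disjoint variables. Your induction on $n$ is the natural way to supply the missing argument, and its structure is sound: split $\selfcomp{s}{n+1}$ as $\selfcomp{s}{n}; s_{n+1}$, unfold the monadic bind, and invoke the frame property --- $s_{n+1}$ reads and writes only block-$(n{+}1)$ variables, while $\selfcomp{s}{n}$ never touches block $n{+}1$, so every intermediate state $\xi$ in the support carries block $n{+}1$ frozen at the renaming of $m$, the event $\bigwedge_{\imath \leq n} E^\imath_\imath$ is decided by $\xi$ alone, and the block-$(n{+}1)$ marginal of $\dsem{\xi}{s_{n+1}}$ is the renaming of $\dsem{m}{s}$. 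You also correctly identify where the real work lies: the structural lemma, proved by induction on statements, that $\dsem{\xi}{s_\imath}$ preserves variables outside block $\imath$ and depends only on the block-$\imath$ part of $\xi$, together with the renaming bijection relating $s_\imath$ to $s$.

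One claim in your last paragraph is inaccurate, though harmless here: losslessness is \emph{not} needed for this proposition. The factorization $\pr{\dsem{\xi}{s_{n+1}}}{\bigwedge_{\imath \leq n+1} E^\imath_\imath} = \ind{\phi}\cdot\pr{\dsem{m}{s}}{E_{n+1}}$, where $\phi$ says that $\xi$ satisfies $\bigwedge_{\imath \leq n} E^\imath_\imath$, is an exact identity for sub-distributions as well: it uses only that every state in the support of $\dsem{\xi}{s_{n+1}}$ agrees with $\xi$ outside block $n{+}1$, not that the weight is $1$. (Sanity check: if $s$ is $\iabort$, both sides of the proposition are $0$.) What fails without losslessness is only your phrasing of the frame lemma via ``point masses'' --- the marginal on the first $n$ blocks is a \emph{scaled} point mass when weight is lost --- so stating the lemma in terms of supports rather than point masses makes the proof both cleaner and slightly more general than the paper's standing assumptions require.
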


\subsection{Probabilistic Relational Hoare Logic}\label{sec:prhl}
Probabilistic Relational Hoare Logic (\Sprhl) is a program logic for
reasoning about relational properties of probabilistic programs. Its
judgments are of the form $\Equiv{s_1}{s_2}{\phi}{\psi}$, where $s_1$
and $s_2$ are commands and the pre-condition $\phi$ and the
post-condition $\psi$ are relational assertions, i.e.\  first-order
formulae built over generalized expressions. The latter are similar to
expressions, except that each variable is tagged with $\sidel$ or
$\sider$ to indicate the execution that it belongs to; we call the two
executions \emph{left} and \emph{right}. Generalized expressions are
interpreted w.r.t.\ a pair $(m_1,m_2)$ of states, where the
interpretation of the tagged variables $x\sidel$ and $x\sider$ are
$m_1(x)$ and $m_2(x)$ respectively.  We write $(m_1,m_2)\vDash \phi$
to denote that the interpretation of the assertion $\phi$ w.r.t.\
$(m_1,m_2)$ is valid.
\begin{definition}
A judgment $\Equiv{s_1}{s_2}{\phi}{\psi}$ is valid iff for every
states $m_1$ and $m_2$, $(m_1,m_2)\vDash \phi$ implies
$\coupledsupp {} 
{\dsem{m_1}{s_1}} {\dsem{m_2}{s_2}} {\{ (m'_1,m'_2) \mid (m'_1,m'_2)
  \vDash \psi\}}$.
\end{definition}
\cref{fig:rules} presents the main rules of the logic; see
\citet{BartheGZ09,BGHS17} for the full system. The logic includes
\emph{two-sided} rules, which operate on both programs, and \emph{one-sided}
rules, which operate on a single program (left or right).

The [\textsc{Conseq}] rule is the rule of consequence, and
reflects that validity is preserved by weakening the post-condition
and strengthening the pre-condition.
The [\textsc{Case}] rule allows proving a judgment by case analysis;
specifically, the validity of a judgment with pre-condition $\pre$ can
be established from the validity of two judgments, one where the
pre-condition is strengthened with $\Xi$ and the other where the
pre-condition is strengthened with $\neg \Xi$.

The [\textsc{Struct}] rule allows replacing programs by provably
equivalent programs. The rules for proving program equivalence are
given in \cref{fig:equiv}, and manipulate judgments of the form
$\eqsem{\pre}{c}{c'}$, where $\pre$ is a relational assertion. The first rule
([\textsc{While-Split}]) splits a single loop into two loops (the first running
while $e'$ is true, and the second running for the remaining iterations); this
transformation is useful for selecting different couplings in different program
iterations. The second rule ([\textsc{Swap}]) reorders two instructions, as long
as they modify disjoint variables. This allows us to couple sampling
instructions that may come from two different parts of the two programs.

Moving on to the two-sided rules,
the [\textsc{Seq}] rule for sequential composition simply reflects the
compositional property of couplings.
The [\textsc{Assg}] rule is standard.
The [\textsc{Rand}] rule informally takes a coupling between the two
distributions used for sampling in the left and right programs, and requires
that every element in the support of the coupling validates the post-condition.
The rule is parametrized by a bijective function $f$ from the domain of the
first distribution to the domain of the second distribution.  This
bijection gives us the freedom to specify the relation between the two samples
when we couple them.
The [\textsc{Cond}] rule states that two \emph{synchronized} \kif\
statements can related if their respective branches are also related.
The [\textsc{While}] rule is the standard while rule adapted to
pRHL. Note that we require the guard of the two commands to be
equal—so in particular the two loops must make the same number of
iterations—and $\pre$ plays the role of the while loop invariant as
usual.

The one-sided rules \textsc{Assg-L}, \textsc{Rand-L}, \textsc{Cond-L} and
\textsc{While-L} are similar two their two-sided variant, but only operate on
the left program. The full system includes mirrored versions of each one-sided
rule, for reasoning about the right program.

% Note that structural rules include rules for
% loop splitting and swapping programs that operate on disjoint parts of
% the state, where $\mathsf{var}(s)$ denotes the set of variables read
% and written by the statement $s$. 

\begin{figure*}
\begin{mathpar}
\xinferrule*[left=\textsc{Conseq}]
 { \Equiv{s_1}{s_2}{\pre}{\post} \\\\
   \pre'\implies \pre \\
   \post \implies \post'}
 { \Equiv{s_1}{s_2}{\pre'}{\post'}}

\xinferrule*[left=\textsc{Struct}]
 {\Equiv{s_1}{s_2}{\pre}{\post} \\\\
   \eqsem{\pre}{s_1}{s_1'} \\ \eqsem{\pre}{s_2}{s_2'}}
 {\Equiv{s_1'}{s_2'}{\pre}{\post}}

\xinferrule*[left=\textsc{Case}]
 {\Equiv{s_1}{s_2}{\pre\land \Xi}{\post} \\\\
  \Equiv{s_1}{s_2}{\pre\land \neg \Xi}{\post}}
 { \Equiv{s_1}{s_2}{\pre}{\post}}

\xinferrule*[left=\textsc{Seq}]
 {\Equiv{s_1}{s_2}{\pre}{\Xi} \\\\
  \Equiv{s'_1}{s'_2}{\Xi}{\post}}
 {\Equiv{s_1;s'_1}{s_2;s'_2}{\pre}{\post}}

\xinferrule*[left=\textsc{Assg}]
  {\pre \eqdef \post[e_1\sidel/x_1\sidel,e_2\sider/x_2\sider]}
  {\Equiv
      {\iass{x_1}{e_1}}{\iass{x_2}{e_2}}
      {\pre}{\post}}

\xinferrule*[left=\textsc{Rand}]
  {\coupledsupp{f}{{g_1}}{{g_2}}{} \\\\
   \pre \eqdef \forall v .\, \post[v/x_1\sidel,f(v)/x_2\sider]}
  {\Equiv
      {\irnd{x_1}{g_1}}{\irnd{x_2}{g_2}}
      {\pre}{\post}}

\xinferrule*[left=\textsc{Cond}]
  {\pre \implies e_1 = e_2 \\\\
   \Equiv{s_1}{s_2}{\pre \land e_1}{\post}{s} \\
   \Equiv{s_1'}{s_2'}{\pre \land \neg e_1}{\post}{s'}}
 {\Equiv
     {\icond{e_1}{s_1}{s_1'}}{\icond{e_2}{s_2}{s_2'}}
     {\pre}{\post}}

\xinferrule*[left=\textsc{While}]
              {
                \Equiv{s_1}{s_2}{\post \land e_1\sidel \land e_2\sider}{
                  \post\land e_1\sidel=e_2\sider}}
  {\Equiv{\iwhile{e_1}{s_1}}{\iwhile{e_2}{s_2}}
      {\post\land e_1\sidel=e_2\sider}
      {\post \land \neg e_1\sidel \land \neg e_2\sider}}

  \xinferrule*[left=\textsc{Assg-L}]
  {\pre \eqdef \post[e_1\sidel/x_1\sidel]}
  {\Equiv
      {\iass{x_1}{e_1}}{\iskip}
      {\pre}{\post}
      }

\xinferrule*[left=\textsc{Rand-L}]
  {\pre \eqdef \forall v_1\in\supp(g_1), \post [v_1/x_1\sidel]}
  {\Equiv
      {\irnd{x_1}{g_1}}{\iskip}
      {\pre}
      {\post}}
  
\xinferrule*[left=\textsc{Cond-L}]
  {\Equiv{s_1}{s_2}{\pre \land e_1\sidel}{\post} \\\\
   \Equiv{s_1'}{s_2}{\pre \land \neg e_1\sidel}{\post}}
  {\Equiv
     {\icond{e_1}{s_1}{s_1'}}{s_2}
     {\pre}{\post}}
 
\xinferrule*[left=\textsc{While-L}]
  {\Equiv{s_1}{\iskip}{\post \land e_1\sidel}{\post}}
  {\Equiv{\iwhile{e_1}{s_1}}{\iskip}
     {\post}{\post\land\neg e_1\sidel}}
\end{mathpar}

\caption{\label{fig:rules} Proof rules (selection)}
\end{figure*}

\begin{figure}
\begin{mathpar}
\xinferrule*[left=\textsc{While-Split}]
{~}{\eqsem{\pre}{\iwhile{e}{s}}{\iwhile{e\land e'}{s};\iwhile{e}{s}}}

\xinferrule*[left=\textsc{Swap}]{\mathsf{var}(s_1) \cap \mathsf{var}(s_2)=\emptyset}{
\eqsem{\pre}{s_1;s_2}{s_2;s_1}}

\end{mathpar}
\caption{Equivalence rules (selection)}\label{fig:equiv}
\end{figure}

Throughout the paper, we often assert that the left and the right
copies of a state are equal. This is captured by the relational
assertion $\EqMem\eqdef \bigwedge_{x\in \mathcal{X}} x\sidel=
x\sider$.  We also often assert cross-equality on $n$-fold composition
of states $\Eqmem{p}{q} \eqdef \bigwedge_{x\in \mathcal{X},1\leq
  \imath \leq p , 1 \leq \jmath \leq q} x^\imath\sidel
=x^\jmath\sider$.

\section{Uniformity}\label{sec:unif}
Reasoning about probabilistic programs often requires establishing
that a set of program variables (each ranging over a finite type) is
uniformly distributed:
\begin{definition}
A set $X=\{ x_1,\ldots, x_n\}$ of program variables of finite types
$A_1,\ldots, A_n$ is \emph{uniformly distributed} in a
distribution $\mu\in\Dist(\Mem)$ iff for every $ (a_1,\ldots, a_n) \in
A_1 \times \ldots \times A_n$:
$$\pr{\mu}{\bigwedge_{1\leq i \leq n} x_i = a_i} =\prod_{1 \leq i \leq n}
\frac{1}{|A_i|}$$
\end{definition}
Note that the definition of uniformity (and as we will see in later sections,
the definition of independence) naturally extends to sets of expressions, and so
do our characterizations.

\subsection{Characterization}

The following proposition characterizes uniformity in terms of couplings.
\begin{prop}[Uniformity by coupling]\label{prop:uni-coupling}
Let $X=\{ x_1, \ldots, x_n\}$ be a set of variables of respective
finite types $A_1,\ldots, A_n$. For every program $s$, the
following are equivalent:
\begin{enumerate}
\item for every state $m$, $X$ is uniformly distributed in
  $\dsem{m}{s}$;
      
\item 
  for every two tuples $(a_1,\ldots,a_n),(a'_1,\ldots,a'_n)\in
  A_1\times \cdots \times A_n$, we have
  \[
    \Equiv{s}{s}{\EqMem}
    {\left(\bigwedge_{1\leq i \leq n} x_i \sidel = a_i\right) \iff
     \left(\bigwedge_{1\leq i \leq n} x_i \sider = a'_i \right)} .
  \]
\end{enumerate}
\end{prop}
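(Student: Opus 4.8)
The plan is to establish the equivalence pointwise in the initial state and to reduce the relational judgment of item~(2) to a plain equality of probabilities. Fix $s$, and for a tuple $\bar a=(a_1,\ldots,a_n)\in A_1\times\cdots\times A_n$ write $E_{\bar a}$ for the event $\bigwedge_{i} x_i = a_i$ over $\Mem$. Since the precondition $\EqMem$ forces the two initial states to coincide, the judgment in item~(2) for a given pair $\bar a,\bar a'$ is valid exactly when, for every state $m$, there is a coupling $\coupledsupp{}{\dsem{m}{s}}{\dsem{m}{s}}{\post}$ with $\post=\{(m_1',m_2')\mid E_{\bar a}(m_1')\iff E_{\bar a'}(m_2')\}$. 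The central claim I would prove is the bridging equivalence: for each $m$, such a coupling exists iff $\pr{\dsem{m}{s}}{E_{\bar a}}=\pr{\dsem{m}{s}}{E_{\bar a'}}$.

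The forward direction of the bridging claim is immediate from the $\iff$-variant of the Fundamental Lemma (\cref{l:cplfd}): applying it to $\mu_1=\mu_2=\dsem{m}{s}$, $E_1=E_{\bar a}$ and $E_2=E_{\bar a'}$, any $\post$-coupling yields $\pr{\dsem{m}{s}}{E_{\bar a}}=\pr{\dsem{m}{s}}{E_{\bar a'}}$. For the converse I would coarsen along the indicator maps $\ind{E_{\bar a}},\ind{E_{\bar a'}}\colon\Mem\to\{0,1\}$: pushing $\dsem{m}{s}$ forward gives two sub-distributions on $\{0,1\}$ that agree exactly when the two probabilities are equal, since their masses at $0$ are then forced to coincide as well. \cref{l:cpleq} supplies a coupling of these two-point distributions supported on the diagonal $\{(b_1,b_2)\mid b_1=b_2\}$, and lifting it back to states by coupling independently within the blocks $E_{\bar a}\times E_{\bar a'}$ and $\neg E_{\bar a}\times\neg E_{\bar a'}$ produces the required $\post$-coupling (the two blocks carry matching mass precisely because the probabilities agree). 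I expect this converse construction to be the main obstacle, as it is where one must actually build the witnessing coupling rather than merely extract a numerical consequence from it.

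With the bridging claim in hand, the two implications follow by a counting argument. For (2)$\,\Rightarrow\,$(1): item~(2) gives, for each $m$, that all $\pr{\dsem{m}{s}}{E_{\bar a}}$ are equal to a common value $c_m$ as $\bar a$ ranges over the $\prod_i\card{A_i}$ tuples. Because states are type-preserving, the events $E_{\bar a}$ partition $\Mem$, so losslessness yields $\sum_{\bar a}\pr{\dsem{m}{s}}{E_{\bar a}}=\wt{\dsem{m}{s}}=1$; having $\prod_i\card{A_i}$ equal summands then forces $c_m=\prod_i 1/\card{A_i}$, which is exactly uniformity of $X$ in $\dsem{m}{s}$. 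For (1)$\,\Rightarrow\,$(2): uniformity gives $\pr{\dsem{m}{s}}{E_{\bar a}}=\prod_i 1/\card{A_i}=\pr{\dsem{m}{s}}{E_{\bar a'}}$ for every pair $\bar a,\bar a'$, so the probabilities coincide and the backward direction of the bridging claim produces the coupling witnessing the judgment. Universally quantifying over $m$ and over the tuples yields the stated equivalence.
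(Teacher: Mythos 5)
Your proposal is correct, and in the direction $(1)\Rightarrow(2)$ it takes a genuinely different route from the paper's own proof. The direction $(2)\Rightarrow(1)$ is essentially the same in both: apply the $\iff$-variant of \cref{l:cplfd} to the self-coupled pair $(\dsem{m}{s},\dsem{m}{s})$; you merely make explicit a counting step the paper leaves implicit (the events $E_{\bar a}\eqdef\bigwedge_{i} x_i=a_i$ partition $\Mem$, so by losslessness the $\prod_i\card{A_i}$ equal probabilities must each equal $\prod_i \nicefrac{1}{\card{A_i}}$). For $(1)\Rightarrow(2)$, however, the paper constructs the coupling as the graph of the explicit involution $f$ that exchanges the $X$-values $(a_1,\ldots,a_n)$ and $(a'_1,\ldots,a'_n)$ while fixing all other variables, whereas you prove a bridging equivalence --- a $\post$-coupling of $\mu$ with itself exists iff $\pr{x\sim\mu}{E_{\bar a}}=\pr{x\sim\mu}{E_{\bar a'}}$ --- and realize that coupling as a normalized product of conditional distributions on the blocks $E_{\bar a}\times E_{\bar a'}$ and $\neg E_{\bar a}\times\neg E_{\bar a'}$ (the only detail to add is the degenerate case where a block has mass zero, in which case that term is simply dropped and no normalization is needed).

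This difference is not cosmetic: your construction is the more robust one. The paper justifies its marginal computation $\dsem{m}{s}(f(m_2))=\dsem{m}{s}(m_2)$ by uniformity of $X$, but uniformity constrains only the joint law of $(x_1,\ldots,x_n)$, not its joint law with the remaining program variables, and the pointwise identity can fail. Consider Boolean variables $x,y$ and the output distribution of $\irnd{x}{\{0,1\}};\ \iass{y}{x}$: it puts mass $\nicefrac{1}{2}$ on each of $(x,y)=(0,0)$ and $(1,1)$, so $x$ is uniform, yet the involution swapping the value of $x$ sends an atom of mass $\nicefrac{1}{2}$ to a memory of mass $0$; hence the second marginal of the paper's $\eta$ is not the output distribution, and $\eta$ is not a coupling. (A valid coupling does exist --- for instance yours, which matches $(0,0)$ with $(1,1)$ and $(1,1)$ with $(0,0)$ --- so the proposition itself is unaffected; it is the paper's particular witness that is flawed.) Your block-product argument uses only the scalar equality of the two event probabilities, which is exactly what uniformity provides, so it is immune to this problem; in exchange, the paper's approach, when it applies, produces a deterministic (bijection-induced) coupling of the kind that the \rname{Rand} rule manipulates, which is why it is the idiom used in the paper's example proofs.
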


\begin{proof}
\begin{description}[leftmargin=0cm]
\item[{[$1. \Rightarrow 2.$]}] Let $m$ be a memory and assume that $X$
  is uniformly distributed in $\dsem{m}{s}$. Let
  $(a_1, \ldots, a_n), (a'_1, \ldots, a'_n) \in A_1 \times \cdots
  \times A_n$.
  We denote by $f : \Mem \to \Mem$ the bijection defined by
  \[ \left\{ \begin{aligned}
    f (m) &= m[x_i \leftarrow a'_i]_{1 \leq i \leq n}
          & \text{if $\forall i .\, m[x_i] = a_i$} \\
    f (m) &= m[x_i \leftarrow a_i]_{1 \leq i \leq n}
          & \text{if $\forall i .\, m[x_i] = a'_i$} \\
    f (m) &= m &\text{otherwise.}
  \end{aligned} \right. \]
  Let $\eta \in \Dist(\Mem \times \Mem)$ be the distribution defined by
  $\eta(m_1, m_2) = \dsem{m}{s}(m_1)$ if $m_2 = f(m_1)$, and
  $\eta(m_1, m_2) = 0$ otherwise. We prove that $\eta$ is a
  $\Psi$-coupling for $\dsem{m}{s}$, where
  \[ \psi \eqdef
     \left(\textstyle\bigwedge_{1\leq i \leq n} x_i \sidel = a_i\right) \iff
     \left(\textstyle\bigwedge_{1\leq i \leq n} x_i \sider = a'_i \right) .\]
  Regarding the marginals, we have:
  \begin{align*}
    \proj_1(\eta)(m_1)
      &= \sum_{m_2} \eta(m_1, m_2)
       = \eta(m_1, f(m_1)) = \dsem{m}{s}(m_1) \\
    \proj_2(\eta)(m_2)
      &= \sum_{m_1} \eta(m_1, m_2)
       = \eta(f^{-1}(m_2), m_2) = \dsem{m}{s}(f^{-1}(m_2)) \\
      &= \dsem{m}{s}(f(m_2)) = \dsem{m}{s}(m_2),
  \end{align*}
  the last equality being a consequence of $X$ being uniformly
  distributed in $\dsem{m}{s}$. Moreover, for
  $(m_1, m_2) \in \supp(\eta)$, we have $m_2 = f(m_1)$. Thus,
  $m_1[x_i] = a_i$ iff $m_2[x_i] = a'_i$, and $m_1, m_2 \models \Psi$.

\item[{[$2. \Rightarrow 1.$]}]
  Let
  $(a_1, \ldots, a_n), (a'_1, \ldots, a'_n) \in A_1 \times \cdots \times A_n$
  and assume that $\Equiv{s}{s}{\EqMem}{\Psi}$, where $\Psi$ is
  defined as in the previous case. Since $m, m \models \EqMem$, by
  \cref{l:cplfd} we have:
  \[ \pr{\dsem{m}{s}}{\textstyle\bigwedge_{1 \leq i \leq n} x_i = a_i} =
     \pr{\dsem{m}{s}}{\textstyle\bigwedge_{1 \leq i \leq n} x_i = a'_i} , \]
  showing that $X$ is uniform in $\dsem{m}{s}$.
  \qedhere
\end{description}
\end{proof}

By expressing uniformity as a coupling property, we can use \Sprhl\ to
prove uniformity. To demonstrate the technique, we consider classical
examples from the theory of randomized algorithms.

\subsection{Simulating a Fair Coin}

\begin{wrapfigure}{l}{5.3cm}
  $$\boxed{\begin{array}{l}
  \iass{x}{\pfalse}; \\
  \iass{y}{\pfalse}; \\
  \iwhile{x = y}{} \\
  \quad x \rnd {\bernD(p)}; \\
  \quad y \rnd {\bernD(p)}; \\
  % \kreturn~x
  \end{array}}
  $$
  \caption{Bernoulli uniformizer}\label{fig:bern}
\end{wrapfigure}
This example considers a process for simulating a fair coin using a
biased coin. The idea is simple: 1) toss the coin twice; 2) if the two
outcomes differ, return the value of the first coin; 3) if the two
outcomes match, repeat from step 1.  The algorithm does not require
the bias of the coin to be known, as long as it is some constant bias
and there is positive probability of returning $0$ and $1$.  This
process can be modelled by the program $s$ from \cref{fig:bern},
where $0 < p < 1$ is a real parameter modeling the probability of the
biased coin to return 0 (tail). Our goal is to establish the trivial
judgment $\hoare{\True}{s}{\True}$ and the following \Sprhl judgment:
$$
\Equiv{s}{s}{\True}{x\sidel \iff \neg x\sider}
$$
%Note that no pre-condition is required here, since the variables $x$
%and $y$ are both initialized in the program.
By the fundamental lemma of coupling, this implies that
$\pr{\dsem{m}{s}}{x=\ptrue}= \pr{\dsem{m}{s}}{x=\pfalse}$,
and hence that $x$ is uniformly distributed upon termination.
The proof proceeds by establishing the following invariant:
$$x\sider = \textrm{if } x\sidel = y\sidel \textrm{ then } y\sider \textrm{ 
else }\neg x\sidel$$
Validity of the invariant entails that the desired postcondition holds
when the program exits, as the invariant and the negation of the loop guard both
hold. The invariant holds when entering the loop, so we only need to
prove that it is preserved by the loop body. The proof proceeds as
follows: first, we swap the two random assignments on the right,
leading to the judgment:
$$ \Equiv{(x \rnd \bernD(p); \quad y \rnd \bernD(p))} {(y \rnd \bernD(p); 
\quad x \rnd \bernD(p))}{\phi'}{ \phi}
$$
where $\phi$ denotes the loop invariant and $\phi'$ denotes its
strengthening by the loop guard---we do not need the precondition,
since the values are freshly sampled in the body. Next, we apply the
\rname{Rand} rule twice, with the identity bijection. The required pre-condition
\[
  \forall v_1, v_2,\,
  v_2 = (\text{if $v_1 = v_2$ then $v_1$ else $\neg v_1$})
\]
is clearly true.

\subsection{Cyclic Random Walk}

\ifbw
\colorlet{ccred}{gray}
\else
\colorlet{ccred}{red}
\fi

\begin{figure}
\centering
\begin{tikzpicture}[baseline]
\tikzstyle{every node}=[font=\scriptsize]
\def\radius{1cm}

\draw[thick] ([shift=(0:\radius)]0,0) arc (0:135:\radius);
\draw[thick] ([shift=(-45:\radius)]0,0) arc (-45:-135:\radius);

\fill[ccred!05!white] (0,0) --
  ([shift=(0:\radius)]0,0) arc (0:-45:\radius) -- cycle;

\draw[ccred,densely dashed,thick] ([shift=(0:\radius)]0,0) arc (0:-45:\radius);

\draw[thick,loosely dotted]
  ([shift=(135:\radius)]0,0) arc (135:225:\radius);

\fill (0,0) circle[radius=2pt];

\fill (0,0) ++ (0:\radius) circle[radius=2pt]
            ++ (0:0.7em) node {$0$};

\fill (0,0) ++ (45:\radius) circle[radius=2pt]
            ++ (45:1.0em) node {$n{-}1$};

\fill (0,0) ++ (90:\radius) circle[radius=2pt];

\fill (0,0) ++ (135:\radius) circle[radius=2pt]
            ++ (135:0.7em) node {$f$};

\fill (0,0) ++ (-45:\radius) circle[radius=2pt]
            ++ (-45:0.7em) node {$1$};

\fill (0,0) ++ (-90:\radius) circle[radius=2pt]
            ++ (-90:0.7em) node {$2$};

\fill (0,0) ++ (-135:\radius) circle[radius=2pt]
            ++ (-135:0.7em) node {$l$};

\draw[->,>=stealth]
  ([shift=(0:0.35*\radius)]0,0) arc (0:-45:0.35*\radius);

\draw[thick,dotted,gray,->,>=stealth] (0,0) -- ++ (0:.7*\radius);

\draw[thick,->,>=stealth] (0,0) -- ++ (-45:.7*\radius)
  node[left] {$c~$};
\end{tikzpicture}
\hspace*{1.5cm}
$\boxed{\begin{array}[c]{l}
   \iass{d}{0};~ \iass{c}{0};~ \iass{f}{0};~ \iass{l}{0}; \\
   \iwhile{l + 1 \leq f}{} \\
   \quad \irnd{d}{\mathcal{U}_{\{-1 ,1\}}}; \\
   \quad \icondT{c = l \land d = 1}{ \iass{l}{l+1};} \\
   \quad \icondT{c = f \land d = -1}{ \iass{f}{f-1};} \\
   \quad \iass{c}{c + d}; \\
   \iass{ret}{(l, l+1)}
 \end{array}}$
\caption{\label{fig:walk} Cyclic random walk}
\end{figure}

Consider a random walk over a cyclic path composed of $n$ nodes
labeled $0,1,\ldots, n - 1$: starting from position $0$, at each step,
we flip a fair coin over $\{-1, 1\}$ and update the position
accordingly to the result of the coin flip. To take into account that
we are on a cyclic structure, all arithmetical operations are in the
cyclic ring $\Zn{n}$---i.e. are performed modulo $n$.  At each
iteration, when moving between two contiguous positions over the circle,
we consider that the random walk visited the arc between the two
nodes.  We want to show that the last visited arc is uniformly
distributed.
\cref{fig:walk}~(left) gives a graphical representation of the random
walk, where $c$ is the random walk position and the dashed arc is the
last visited arc when c moved from $0$ to $1$.

\begin{wraptable}{l}{5.5cm}
\def\radius{0.7cm}
\begin{tabular}{@{}c|c@{}}
sync'ed & anti-sync'ed \\[.5em]
\hline
\begin{tikzpicture}
\tikzstyle{every node}=[font=\scriptsize]
\draw[thick]
  ([shift=(-45:\radius)]0,0) arc (-45:135:\radius);

\draw[thick,loosely dotted]
  ([shift=(135:\radius)]0,0) arc (135:270:\radius);

\draw[ccred,thick,densely dashed] ([shift=(-45:\radius)]0,0) arc (-45:-90:\radius);

\fill (0,0) circle[radius=2pt];

\fill (0,0) ++ (-45:\radius) circle[radius=2pt]
            ++ (-45:0.8em) node {$a, l$};

\fill (0,0) ++ (  0:\radius) circle[radius=2pt];
\fill (0,0) ++ ( 45:\radius) circle[radius=2pt];
\fill (0,0) ++ ( 90:\radius) circle[radius=2pt];

\fill (0,0) ++ (135:\radius) circle[radius=2pt]
            ++ (135:0.7em) node {$f$};

\fill (0,0) ++ (-90:\radius) circle[radius=2pt]
            ++ (-90:0.7em) node {$a{+}1$};

\draw[thick,->,>=stealth] (0,0) -- ++ (-45:.7*\radius)
  node[left] {$c~$};

\draw[thick]
  ([shift=(-135:\radius)]0,-2.5) arc (-135:90:\radius);

\draw[thick,loosely dotted]
  ([shift=(90:\radius)]0,-2.5) arc (90:180:\radius);

\draw[ccred,thick,densely dashed] ([shift=(180:\radius)]0,-2.5)
  arc (180:225:\radius);

\fill (0,-2.5) circle[radius=2pt];

\fill (0,-2.5) ++ (-135:\radius) circle[radius=2pt]
            ++ (-135:0.8em) node {$b, l$};

\fill (0,-2.5) ++ (-90:\radius) circle[radius=2pt];
\fill (0,-2.5) ++ (-45:\radius) circle[radius=2pt];
\fill (0,-2.5) ++ (  0:\radius) circle[radius=2pt];
\fill (0,-2.5) ++ ( 45:\radius) circle[radius=2pt];

\fill (0,-2.5) ++ (90:\radius) circle[radius=2pt]
               ++ (90:0.7em) node {$f$};

\fill (0,-2.5) ++ (-180:\radius) circle[radius=2pt]
               ++ (-180:1.0em) node {$b{+}1$};

\draw[thick,->,>=stealth] (0,-2.5) -- ++ (-135:.7*\radius)
  node[right] {$~c$};
\end{tikzpicture} &
\begin{tikzpicture}
\tikzstyle{every node}=[font=\scriptsize]
\draw[thick]
  ([shift=(45:\radius)]0,0) arc (45:270:\radius);

\draw[thick,loosely dotted]
  ([shift=(45:\radius)]0,0) arc (45:-45:\radius);

\draw[ccred,thick,densely dashed] ([shift=(-45:\radius)]0,0) arc (-45:-90:\radius);

\fill (0,0) circle[radius=2pt];

\fill (0,0) ++ (-45:\radius) circle[radius=2pt]
            ++ (-45:0.8em) node {$a$};

\fill (0,0) ++ ( 45:\radius) circle[radius=2pt]
            ++ ( 45:0.7em) node {$l$};

\fill (0,0) ++ ( 90:\radius) circle[radius=2pt];
\fill (0,0) ++ (135:\radius) circle[radius=2pt];
\fill (0,0) ++ (180:\radius) circle[radius=2pt];
\fill (0,0) ++ (225:\radius) circle[radius=2pt];

\fill (0,0) ++ (-90:\radius) circle[radius=2pt]
            ++ (-90:0.7em) node {$a{+}1,f$};

\draw[thick,->,>=stealth] (0,0) -- ++ (-90:.7*\radius)
  node[left] {$c$};

\draw[thick]
  ([shift=(-135:\radius)]0,-2.5) arc (-135:45:\radius);

\draw[thick,loosely dotted]
  ([shift=(45:\radius)]0,-2.5) arc (45:180:\radius);

\draw[ccred,thick,densely dashed] ([shift=(180:\radius)]0,-2.5)
  arc (180:225:\radius);

\fill (0,-2.5) circle[radius=2pt];

\fill (0,-2.5) ++ (-135:\radius) circle[radius=2pt]
            ++ (-135:0.8em) node {$b, l$};

\fill (0,-2.5) ++ (-90:\radius) circle[radius=2pt];
\fill (0,-2.5) ++ (-45:\radius) circle[radius=2pt];
\fill (0,-2.5) ++ (  0:\radius) circle[radius=2pt];
\fill (0,-2.5) ++ ( 45:\radius) circle[radius=2pt]
               ++ ( 45:0.7em) node {$f$};

\fill (0,-2.5) ++ (-180:\radius) circle[radius=2pt]
               ++ (-180:1.0em) node {$b{+}1$};

\draw[thick,->,>=stealth] (0,-2.5) -- ++ (-135:.7*\radius)
  node[right] {$~c$};
\end{tikzpicture}
\end{tabular}
\end{wraptable}

This process can be seen as a simple version of an algorithm that samples a
uniformly random spanning tree on a graph---when the graph is a cycle, a
spanning tree visits all but one of the edges. While \citet{Broder89}
analyzes the general problem, we can verify uniformity for the cyclic
random walk with couplings.

The proof proceeds as follows. We imagine executing two random walks, from the
same initial position. The goal is to couple the walks so that $(a, a+1)$ is the
last arc in the first walk if and only if $(b, b+1)$ is the last arc in the
second walk. If we can show this property for all $a, b$, then this coupling
argument shows that any two arcs have the same probability of being the last
arc, hence the last arc must be uniformly distributed.

\begin{wraptable}{r}{5.7cm}
\def\radius{0.7cm}
\begin{tabular}{@{}c|c@{}}
case (i) & case (ii) \\[.5em]
\hline
\begin{tikzpicture}
\tikzstyle{every node}=[font=\scriptsize]
\draw[thick]
  ([shift=(-45:\radius)]0,0) arc (-45:270:\radius);

\draw[thick,loosely dotted]
  ([shift=(-90:\radius)]0,0) arc (-90:-45:\radius);

\fill (0,0) circle[radius=2pt];

\fill (0,0) ++ (-45:\radius) circle[radius=2pt]
            ++ (-45:0.8em) node {$a,l$};

\fill (0,0) ++ (  0:\radius) circle[radius=2pt];
\fill (0,0) ++ ( 45:\radius) circle[radius=2pt];
\fill (0,0) ++ ( 90:\radius) circle[radius=2pt];
\fill (0,0) ++ (135:\radius) circle[radius=2pt];
\fill (0,0) ++ (180:\radius) circle[radius=2pt];
\fill (0,0) ++ (225:\radius) circle[radius=2pt];

\fill (0,0) ++ (-90:\radius) circle[radius=2pt]
            ++ (-90:1.0em) node {$a{+}1, f$};

\draw[thick,->,>=stealth] (0,0) -- ++ (-90:.7*\radius);

\draw[thick]
  ([shift=(-135:\radius)]0,-2.5) arc (-135:180:\radius);

\draw[thick,loosely dotted]
  ([shift=(180:\radius)]0,-2.5) arc (180:225:\radius);

\fill (0,-2.5) circle[radius=2pt];

\fill (0,-2.5) ++ (-135:\radius) circle[radius=2pt]
            ++ (-135:0.8em) node {$b, l$};

\fill (0,-2.5) ++ (-90:\radius) circle[radius=2pt];
\fill (0,-2.5) ++ (-45:\radius) circle[radius=2pt];
\fill (0,-2.5) ++ (  0:\radius) circle[radius=2pt];
\fill (0,-2.5) ++ ( 45:\radius) circle[radius=2pt];
\fill (0,-2.5) ++ ( 90:\radius) circle[radius=2pt];
\fill (0,-2.5) ++ (135:\radius) circle[radius=2pt];

\fill (0,-2.5) ++ (-180:\radius) circle[radius=2pt]
               ++ (-180:1.0em) node
               {$\begin{array}{c}b{+}1 \\ f\end{array}$};

\draw[thick,->,>=stealth] (0,-2.5) -- ++ (180:.7*\radius);
\end{tikzpicture} &
\begin{tikzpicture}
\tikzstyle{every node}=[font=\scriptsize]
\draw[thick]
  ([shift=(45:\radius)]0,0) arc (45:315:\radius);

\draw[thick,loosely dotted]
  ([shift=(45:\radius)]0,0) arc (45:-45:\radius);

\fill (0,0) circle[radius=2pt];

\fill (0,0) ++ (-45:\radius) circle[radius=2pt]
            ++ (-45:0.8em) node {$a,f$};

\fill (0,0) ++ (  0:\radius) circle[radius=2pt];

\fill (0,0) ++ ( 45:\radius) circle[radius=2pt]
            ++ ( 45:0.7em) node {$l$};

\fill (0,0) ++ ( 90:\radius) circle[radius=2pt];
\fill (0,0) ++ (135:\radius) circle[radius=2pt];
\fill (0,0) ++ (180:\radius) circle[radius=2pt];
\fill (0,0) ++ (225:\radius) circle[radius=2pt];

\fill (0,0) ++ (-90:\radius) circle[radius=2pt]
            ++ (-90:0.7em) node {$a{+}1$};

\draw[thick,->,>=stealth] (0,0) -- ++ (-45:.7*\radius);

\draw[thick]
  ([shift=(-180:\radius)]0,-2.5) arc (-180:90:\radius);

\draw[thick,loosely dotted]
  ([shift=(90:\radius)]0,-2.5) arc (90:180:\radius);

\fill (0,-2.5) circle[radius=2pt];

\fill (0,-2.5) ++ (-135:\radius) circle[radius=2pt]
            ++ (-135:0.8em) node {$b$};

\fill (0,-2.5) ++ (-90:\radius) circle[radius=2pt];
\fill (0,-2.5) ++ (-45:\radius) circle[radius=2pt];
\fill (0,-2.5) ++ (  0:\radius) circle[radius=2pt];
\fill (0,-2.5) ++ ( 45:\radius) circle[radius=2pt];
\fill (0,-2.5) ++ ( 90:\radius) circle[radius=2pt]
               ++ ( 90:0.7em) node {$f$};

\fill (0,-2.5) ++ (135:\radius) circle[radius=2pt];

\fill (0,-2.5) ++ (-180:\radius) circle[radius=2pt]
               ++ (-180:1.0em) node
               {$\begin{array}{c}b{+}1 \\ l\end{array}$};

\draw[thick,->,>=stealth] (0,-2.5) -- ++ (180:.7*\radius);
\end{tikzpicture}
\end{tabular}
\end{wraptable}

To describe the coupling informally, we first execute asynchronously
the two random walks until they eventually synchronize respectively on
the arcs $(a, a+1)$ and $(b, b+1)$. At that point, we are in one of
the following cases: either the random walks synchronize on the same
side of the arcs $(a, a+1)$ and $(b, b+1)$, or they synchronize on
opposite sides. (These cases are depicted on the left diagrams above,
where the arc we want to synchronize on is dashed.) From that point,
we execute the two processes resp.\ in lock-step (if they synchronized
on the same side) or anti-lock-step (if they did not).

At some point, both processes will visit the other side of the arcs
$(a, a+1)$ and $(b,b+1)$, and since they execute in (anti)-lock-step,
these events will occur synchronously.
At that point, either the processes finished their walk and they
resp.\ return the arcs $(a, a+1)$ and $(b, b+1)$ as their result
(case~(i) of the right diagram above), or they have other nodes to
visit and so they \emph{will not} resp.\ return the arcs $(a, a+1)$ and
$(b, b+1)$ (case~(ii) of the same diagram).

We now detail the formal proof.  Consider the program of
\cref{fig:walk}, where all arithmetical operations are done
modulo $n$. This algorithm instruments the random walk with two points
$f$ and $l$ representing the range $[f,l]$ (using clockwise ordering)
of all the points that have been visited by the walk. When all nodes
of the cycle have been visited (i.e. when $l+1 = f$), the arc between
$l$ and $l+1$ is the only arc that has not been visited by the walk.
Let $s$ be the program of \cref{fig:walk} and $s'$ the loop body
of the single loop of $s$. We want to show that the final arc
$ret$---the only arc that has not been marked---is uniformly
distributed among all arcs. This follows by the judgment:
\[
\forall a, b \in \Zn{n} ,\,
  \Equiv{s}{s}{\top}{ret\sidel=(a, a+1) \iff ret\sider=(b, b+1)} .
\]

First, we make use of the loop splitting equivalence rule (\rname{While-Split}) to
transform the main loop into three pieces. In the left program:
\[
  \begin{array}{l}
    \iwhile{(|v| < n \land a, a+1 \notin [f,l])}{s'};  \\
    \iwhile{(|v| < n \land \neg(a, a+1 \in [f,l]))}{s'};  \\
    \iwhile{(|v| < n)}{s'}
  \end{array}
\]
where $[f,l]$ represent the range $f, f+1, \ldots, l$. We
use a similar transformation on the right program, with $b$ in place
of $a$.  To carry out the proof, we first use the one-sided loop rules
(\rname{While-L} and the corresponding version \rname{While-R}) on the
first loops of the left and right programs. This part of the proof
correspond to the walks synchronization as described above.  By a
straightforward loop invariant, we can show that
\[
  \Phi \land P(a) \sidel \land P(b)\sider
\]
holds after the first loops, where
$P(x) \eqdef (x \in [f,l] \oplus (x + 1) \in [f, l])$
and
$\Phi \eqdef \forall i \in \{1, 2\} .\, (c\in [f,l])\langle i\rangle$
indicates that the current positions ($c\sidel,c\sider$)
are contained in the range of visited arcs.  Next, we show that after
the two second loops the following relational invariant is satisfied:
\[
  (a, a+1 \in [f,l])\sidel \land
  (b, b+1 \in [f,l])\sider \land
  (l\sidel=a \iff l\sider=b)
\]
After the second loop, there are two cases for the third loop. If
$l\sidel=a$, we have $l\sider=b$, $f\sidel=a+1$ (since $a+1$ is
visited in $\sidel$) and $f\sider=b+1$. In this case, which corresponds
to the case~(i) of the last diagram, the third loops both exit
immediately and the random walks resp.\ return the arcs $(a, a+1)$ and
$(b, b+1)$.  Otherwise, we have $l\sidel \neq a$ and $l\sider \neq b$,
and we can show, using the rules \rname{While-L} and \rname{While-R},
that $l\sidel$ (resp.\ $l\sider$) will never be set to $a$
(resp.\ b). In this case, which corresponds to the case~(ii) of the last
diagram, we can show that the walks resp.\ return arcs distinct from
$(a, a+1)$ and $(b, b+1)$.

We now focus on the second loops, relating them with the two-sided
variant of the \rname{While} rule. The particular coupling we choose
will depend on the current positions in the two sides at the start
of the second loops.  If $a \in [f,l]\sidel$ and $b \in [f,l]\sider$
then we have $ l\sidel = a$ and $l\sider = b$ (since
$a+1 \notin[f,l]\sidel$) and we couple the walks to make identical
moves. In that case, the key part of the loop invariant is:
\[ \bigwedge \left\{ \begin{gathered}
  \Phi \land (a \in [f,l])\sidel \land (b \in [f,l])\sider  \\
  c\sidel - a = c\sider - b \\
  l\sidel=a \iff  l\sider = b 
\end{gathered} \right\} \]
The first line enforces some structural invariant and the second line
enforces that both walks make identical moves relative to $a$ and
$b$. The main difficulty is to show that both loops are
synchronized. Note that there are two reasons the loop may exit. If
$l$ has been incremented, then the increment will be done on both
side. Otherwise, if $f$ has been decremented to $a+1$, then we have
$c\sidel = f\sidel = a+2$, so $c\sidel - a = c\sider - b = 2$ and
$c\sider = b + 2$ and the right loop will also decrement $f$ to $b+1$.
The case $a+1 \in [f,l]\sidel$ and $b +1\in [f,l]\sider$ is very
similar, by reversing the roles of $f$ and $l$.
The remaining two cases, $a+1 \in [f,l]\sidel$ and $b\in [f,l]\sider$
or $a \in [f,l]\sidel$ and $b+1\in [f,l]\sider$ is similar except that
we force the walks to be execute in anti-lock-step.
Using the rule \rname{Case}, we put together these four cases and we
conclude by application of the rule for sequence.

\subsection{Ballot Theorem}\label{sec:discussion}

So far, we have shown how couplings can be used to prove that a set of program
variables is uniformly distributed. Couplings can also be used for showing that
two events have the same probability, such as in the following example.

\begin{example}[Ballot Theorem]
Assume that voters must choose between two candidates $A$ and $B$.
The outcome of the vote is $n_A$ votes for $A$ and $n_B$ votes for
$B$, with $n_A > n_B$. Assuming that the order in which the votes are
cast is uniformly random, the probability that $A$ is always strictly
ahead in partial counts is $\nicefrac{(n_A - n_B)}{(n_A + n_B)}$.
\end{example}

The process can be formalized by the program from
\cref{fig:ballot}.  Here we use the list $l$ to store
intermediate results.  Using $l_i$ to denote the $i$-th element of the
list $l$, the Ballot Theorem is captured by the statement:
\[ \forall n_A, n_B. n_A > n_B \implies
     \pr{\dsem{m}{s}}{\left. \bigwedge_{1\leq i \leq n} l_i \neq 0
       ~\right|~
     x_A =n_A \land x_B= n_B}
   = \frac{n_A - n_B }{n_A + n_B} .\]

\begin{wrapfigure}{l}{5.3cm}
\vspace{-4ex}
$$\boxed{\begin{array}{l}
  \iass r 0; \iass {x_A} 0; \iass {x_B} 0; \iass l \epsilon; \\
  \iwhile{|l| \leq n}{} \\
  \quad \irnd r \{ A, B \}; \\
  \quad \icondT{r=A}{}; \\
  \quad\quad \iass {x_A} {x_A+1}; \\
  \quad\kelse \\
  \quad\quad {\iass {x_B} {x_B+1}} ; \\
  \quad \iass l {l::(x_A-x_B)}
\end{array}}$$
\caption{Ballot theorem}\label{fig:ballot}
\end{wrapfigure}

There exist many proofs of the Ballot Theorem; we formalize a proof that is
sometimes called Andre's reflection principle. The crux of the method is a
coupling proof of the following fact: ``bad'' sequences starting with
a vote to the loser are equi-probable with ``bad'' sequences starting
with a vote to the winner, where a sequence of votes is ``bad'' if
there is a tie at some point in the partial counts. Let
$\phi \eqdef ( \bigvee_{1\leq i \leq n} l_i = 0)$ and
$\psi\eqdef x_A = n_A \land x_B = n_B$.
The above facts are captured by the \Sprhl judgment (universally
quantified over $n_A$ and $n_B$ such that $n_A>n_B$):
$\Equiv{s}{s}{\top}{\xi}$ where
\[ \xi \eqdef
     (l_1 \cdot l_n > 0 \land \phi \land \psi)\sidel
      \iff (l_1 \cdot l_n < 0 \land \phi \land \psi)\sider .
\]
It follows from the properties of coupling that for every $n_A$ and
$n_B$ such that $n_A>n_B$,
$\pr{\dsem{m}{s}}{l_1 \cdot l_n > 0 \land \phi \land l_n = k}=
\pr{\dsem{m}{s}}{l_1 \cdot l_n < 0  \land  \phi \land  l_n = k}$.
In terms of conditional probabilities, we have
$\pr{\dsem{m}{s}}{l_1 \cdot l_n > 0 \land \phi \mid \psi}=
\pr{\dsem{m}{s}}{l_1 \cdot l_n < 0  \land  \phi \mid \psi}$.
Now observe that any sequence that starts with a vote to $B$ (i.e.\
the loser) is necessarily bad. Therefore,
$\pr{\dsem{m}{s}}{l_1 \cdot l_n < 0  \land  \phi \mid \psi} =
 \pr{\dsem{m}{s}}{ l_1 \cdot l_n < 0 \mid \psi}$.
By the above and elementary properties of conditional independence:
\begin{align*}
  \pr{\dsem{m}{s}}{\phi \mid \psi} &=
  \pr{\dsem{m}{s}}{l_1 \cdot l_n > 0 \land \phi \mid \psi}  +
  \pr{\dsem{m}{s}}{l_1 \cdot l_n < 0 \land \phi \mid \psi} \\
    & = 2 \cdot \pr{\dsem{m}{s}}{l_1 \cdot l_n < 0 \mid \psi} .
\end{align*}
Note that the probability in the right-hand side of the last equation
represents the probability that the first vote goes to the loser,
conditional on $\psi$. This turns out to be exactly
$\frac{n_B}{n_A+n_B}$, so we conclude that
$\pr{\dsem{m}{s}}{\phi \mid \psi} = 2\cdot \frac{n_B}{n_A+n_B}$
or equivalently
$\pr{\dsem{m}{s}}{\neg\phi \mid \psi} = \frac{n_A-n_B}{n_A+n_B}$
as desired.

We now turn to the proof of the \Sprhl judgments. By symmetry it
suffices to consider the first judgment. Using the rule of consequence
and the elimination rule for universal quantification, it suffices to
prove for every $i$:
$$\Equiv{s}{s}{\top}{l_1 \sidel \cdot l_n\sidel > 0  \land
  l_i\sidel =0 \land \psi\sidel 
  \Rightarrow  l_1 \sider \cdot l_n\sider < 0 \land l_i\sider =0 \land \psi\sider}
$$
We couple the samplings of $x$ using the negation function until
$|l| =i$, and then with the identity bijection. This establishes the
following loop invariant, from which we can conclude:
\[(\forall j\leq i.~l_j\sidel = - l_j\sider) \land l_i\sidel=l_i\sider=0 \land
    (\forall j> i.~l_j\sidel =l_j\sider) . \]

\section{Independence}\label{sec:indep}

We now turn to characterizing probabilistic independence using
couplings. We focus on probabilistic independence of program
variables, a common task when reasoning about randomized computations. In our
setting, the textbook definition of probabilistic independence can be cast as
follows:
\begin{definition}
A set $X=\{ x_1,\ldots, x_n\}$ of program variables of types
$A_1,\ldots, A_n$ is probabilistically independent in a
distribution $\mu\in\Dist(\Mem)$ iff for every $(a_1,\ldots, a_n) \in
A_1\times \ldots\times A_n$:
\[ \pr{\mu}{\bigwedge_{1\leq i \leq n} x_i = a_i} =
  \prod_{1\leq i \leq n} \pr{\mu}{x_i = a_i} . \]
\end{definition}

\subsection{Characterization}
Our first characterization of independence is based on the observation
that uniformity entails independence.
\begin{fact}[Independence from uniformity]
From every state $m$, if $X$ is uniformly distributed in $\dsem{m}{s}$ then $X$
is independent in $\dsem{m}{s}$.
\end{fact}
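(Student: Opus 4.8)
The plan is to reduce independence to a computation of marginals: once I know that each marginal $\pr{\dsem{m}{s}}{x_i = a_i}$ equals $\frac{1}{|A_i|}$, the defining equation of independence follows by directly comparing the product of the marginals with the joint probability supplied by the uniformity hypothesis. Concretely, I would fix a state $m$, write $\mu \eqdef \dsem{m}{s}$, assume that $X$ is uniformly distributed in $\mu$, and fix an arbitrary tuple $(a_1, \ldots, a_n) \in A_1 \times \cdots \times A_n$.

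The core step is to compute a single marginal $\pr{\mu}{x_i = a_i}$ by summing the joint distribution over all the other coordinates in $X$. Using the law of total probability I would write
\[
  \pr{\mu}{x_i = a_i}
    = \sum_{(a'_j)_{j \neq i} \in \prod_{j \neq i} A_j}
        \pr{\mu}{x_i = a_i \land \bigwedge_{j \neq i} x_j = a'_j} ,
\]
then substitute the uniform value $\prod_{1 \le k \le n} \frac{1}{|A_k|}$ for each summand and count the $\prod_{j \neq i} |A_j|$ summands to obtain $\pr{\mu}{x_i = a_i} = \frac{1}{|A_i|}$.

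With the marginals in hand, I would conclude by the chain
\[
  \prod_{1 \le i \le n} \pr{\mu}{x_i = a_i}
    = \prod_{1 \le i \le n} \frac{1}{|A_i|}
    = \pr{\mu}{\bigwedge_{1 \le i \le n} x_i = a_i} ,
\]
where the last equality is again the uniformity hypothesis. This is exactly the defining equation of independence, and since $(a_1, \ldots, a_n)$ and $m$ were arbitrary, the claim follows for all tuples and all states.

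The main obstacle, and the only place where real care is needed, is justifying the marginalization identity: it requires that for each fixed $j$ the events $\{x_j = a'_j\}_{a'_j \in A_j}$ partition the mass of $\mu$, i.e.\ that $\sum_{a'_j \in A_j} \pr{\mu}{x_j = a'_j} = 1$. This is precisely where the standing losslessness assumption ($\wt{\mu} = 1$) is used, together with well-typedness (every state assigns $x_j$ a value in its finite type $A_j$, so the disjunction over all tuples always holds and distinct tuples give disjoint events). Without losslessness the marginal could fall strictly below $\frac{1}{|A_i|}$ and the computation would break down.
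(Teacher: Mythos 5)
Your proof is correct and matches the paper's own argument: the paper states this Fact without a formal proof, but its introduction sketches exactly this marginalization argument for the two-variable case ($\pr{\dsem{m}{s}}{x=a} = \sum_{b\in B}\pr{\dsem{m}{s}}{x=a \land y=b} = \frac{1}{|A|}$, from which independence follows), and your version is the straightforward $n$-variable generalization of that same route. One small correction to your closing paragraph: the marginalization identity needs only well-typedness and countable additivity of the sub-distribution, not losslessness---indeed, summing the uniformity hypothesis over all tuples in $A_1\times\cdots\times A_n$ forces $\wt{\mu}=1$, so losslessness is a consequence of the uniformity assumption rather than an extra ingredient on which the computation depends.
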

This observation enables proving independence by
coupling, in the special case where variables are uniform and
independent. For the general case, we will use an alternative characterization
based on self-composition.

\begin{prop}[Independence by coupling] \label{prop:indep-couple}
The following are equivalent:
\begin{enumerate}
\item for every state $m$, $X$ is independent in $\dsem{m}{s}$;
\item the following judgment, between a single copy of the program on the one 
  hand and a $n$-fold copy on the other hand, is derivable for every 
  $(a_1,\ldots, a_n)\in A_1\times\ldots \times A_n$:
  $$
  \Equiv{s}{\selfcomp{s}{n}}{\Eqmem{1}{n}}{
    \bigwedge_{1\leq i\leq n} x_{i} \sidel = a_i \iff
      \bigwedge_{1\leq i\leq n} x_i^i \sider = a_i} .
  $$
\end{enumerate}
\end{prop}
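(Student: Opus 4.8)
The plan is to prove the two implications separately, mirroring the structure of the proof of \cref{prop:uni-coupling} and using \cref{prop:self-comp} to bridge the self-composed program and a product of probabilities; as there, I argue semantically, i.e.\ in terms of the existence of the couplings that the judgment $\vDash$ asserts. Throughout, fix a state $m$ and a tuple $(a_1,\ldots,a_n)$, write $\mu_L \eqdef \dsem{m}{s}$ and $\mu_R \eqdef \dsem{\selfcomp{m}{n}}{\selfcomp{s}{n}}$, and let $E_L$ and $E_R$ be the events $\bigwedge_{1\le i\le n} x_i = a_i$ over $\Mem$ and $\bigwedge_{1\le i\le n} x_i^i = a_i$ over the space $\selfcomp{\Mem}{n}$ of $n$-fold composed states. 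Applying \cref{prop:self-comp} with the per-copy events $E_i \eqdef (x_i = a_i)$ yields the central identity
$$\pr{\mu_R}{E_R} = \prod_{1\le i \le n} \pr{\dsem{m}{s}}{x_i = a_i},$$
so that, for this $m$ and tuple, $X$ is independent in $\dsem{m}{s}$ exactly when $\pr{\mu_L}{E_L} = \pr{\mu_R}{E_R}$.

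For the direction $2 \Rightarrow 1$, I would take the left initial state to be $m$ and the right initial state to be $\selfcomp{m}{n}$; these satisfy the precondition $\Eqmem{1}{n}$ since every copy inside $\selfcomp{m}{n}$ agrees with $m$. Validity of the judgment then supplies a coupling of $\mu_L$ and $\mu_R$ supported on the relation $u \in E_L \iff v \in E_R$, and the $\iff$/$=$ variant of \cref{l:cplfd} gives $\pr{\mu_L}{E_L} = \pr{\mu_R}{E_R}$. By the central identity this is precisely independence of $X$ in $\dsem{m}{s}$, and since $m$ and the tuple were arbitrary, item~1 follows.

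For the direction $1 \Rightarrow 2$, the central identity combined with independence gives $\pr{\mu_L}{E_L} = \pr{\mu_R}{E_R}$; call this common value $p$. It remains to construct a coupling of $\mu_L$ and $\mu_R$ supported on $\{(u,v)\mid u\in E_L \iff v\in E_R\}$, which is the main obstacle. Using losslessness ($\wt{\mu_L}=\wt{\mu_R}=1$) and assuming $0<p<1$, I would couple the two distributions independently inside the block $E_L\times E_R$ and inside the block on which both events fail, setting
$$\eta(u,v) \eqdef \begin{cases}
  \mu_L(u)\,\mu_R(v)/p & \text{if } u\in E_L \text{ and } v\in E_R, \\
  \mu_L(u)\,\mu_R(v)/(1-p) & \text{if } u\notin E_L \text{ and } v\notin E_R, \\
  0 & \text{otherwise.}
\end{cases}$$
Summing over $v$ within each block, using $\sum_{v\in E_R}\mu_R(v)=p$ and $\sum_{v\notin E_R}\mu_R(v)=1-p$, gives $\proj_1(\eta)=\mu_L$, and symmetrically $\proj_2(\eta)=\mu_R$; the support lies in the required relation by construction, so $\eta$ witnesses the judgment.

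The only delicate points are the degenerate cases $p=0$ and $p=1$, where one block is empty and the plain product coupling $\mu_L\otimes\mu_R$ already concentrates on pairs with $u\in E_L \iff v\in E_R$ (both false, resp.\ both true), and the harmless identification of the single left copy with the $x^1$-tagged variables appearing in $\Eqmem{1}{n}$. Neither is conceptually difficult: the essential content is the block construction above, which is exactly the converse of the fundamental lemma specialized to one pair of events, in the same spirit as the equivalence of the third and fourth items of \cref{l:cpleq}.
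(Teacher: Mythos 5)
Your proposal is correct and follows essentially the same route as the paper's proof: both reduce validity of the judgment to the equality $\pr{\dsem{m}{s}}{\bigwedge_i x_i = a_i} = \pr{\dsem{\selfcomp{m}{n}}{\selfcomp{s}{n}}}{\bigwedge_i x_i^i = a_i}$ and then invoke \cref{prop:self-comp} to identify the right-hand side with the product $\prod_i \pr{\dsem{m}{s}}{x_i = a_i}$. The only difference is one of detail: the paper asserts this equivalence in a single line, whereas you additionally spell out the converse direction of \cref{l:cplfd} by explicitly constructing the two-block coupling (including the degenerate cases $p \in \{0,1\}$), a step the paper leaves implicit.
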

\begin{proof}
The validity of the universally quantified \Sprhl\ judgment is
equivalent to the following statement: for every $a_1,\ldots, a_n$ and
$n$-fold copy $\selfcomp{m}{n}$ of some initial state $m$,
$$\pr{\dsem{m}{s}}{\bigwedge_{1\leq i\leq n} x_{i} = a_i} =
  \pr{\dsem{\selfcomp{m}{n}}{\selfcomp{s}{n}}}{\bigwedge_{1\leq
      i\leq n} x_{i}^i = a_i} =
  \prod_{1\leq i\leq n} \pr{\dsem{m}{s}}{x_{i} = a_i}.$$
The last equality comes from the property of $n$-fold 
self-composition (\cref{prop:self-comp}).
\end{proof}

\subsection{Pairwise Independence of Bits}
\label{sec:pwindep}

Our first example is a well-known algorithm for generating $2^n$
pairwise independent bits. The algorithm first samples $n$ independent
bits $b_1\ldots b_n$, and then defines for every subset $X\subseteq
\{1,\ldots, n\}$ the bit $z_X =\bigoplus_{i\in X} b_i$.

\begin{wrapfigure}{l}{5.5cm}
  $$\boxed{
    \begin{array}{l}
    \ifor{i}{1}{n}{} \\
    \quad \irnd {b_i} {\{0,1\}}; \\
    \ifor{j}{0}{2^n -1}{} \\
    \quad \iass {z_{j}} {\bigoplus_{k\in \mathsf{bits}(j)} b_k};
    %\kreturn~(z_0, \ldots, z_{2^n-1})
    \end{array}
  }
  $$
\caption{Pairwise independence}\label{fig:pwindep}
\end{wrapfigure}

We can prove pairwise independence of the computed bits, i.e.\, for
every $X\neq Y$, $z_X$ and $z_Y$ are independent. Since there are
$2^n$ subsets of $\{1,\ldots, n\}$, this gives us $2^n$ pairwise
independent bits constructed from $n$ independent bits. The algorithm
is encoded by the program $s$ in \cref{fig:pwindep}, where
$\mathsf{bits}$ maps $\{0,\ldots, 2^n-1\}$ to a subset in
$\mathcal{P}(\{1,\ldots, n\})$ of positions that are $1$ in the binary
representation, and $\ifor{i}{a}{b}{s}$ is usual syntactic sugar for
$\kwhile$ loop with an incrementing counter $i$.
% where the counter $i$, previously set to $a$ is
% incremented by 1 at each iteration, and the loop exits when $i>b$.

By our characterization based on self-composition, pairwise independence of
$z_j$ and $z_{j'}$ for every $j\neq j'$ is equivalent to the
(universally quantified) \Sprhl judgment
$$\Equiv{s}{s_1;s_2}{\top}{z_{j}\sidel= a \land
  z_{j'}\sidel =a' \iff z_{j}^1\sider= a \land
  z_{j'}^2\sider =a'} .$$
Since $j\neq j'$, the two sets $\mathsf{bits}(j)$ and $\mathsf{bits}(j')$
must differ in at least one element.  Let $k_0$ be the smallest element in 
which they differ.  Without loss of generality, we can assume that
$k_0\notin\mathsf{bits}(j)$ and $k_0\in \mathsf{bits}(j')$.
The crux of the proof is to establish the following judgment:
$$\Equiv{s_l}{s_r}{\top}{z\sidel= a \land
  z'\sidel =a' \iff z\sider= a \land
  z''\sider =a'}$$
where $z=\bigoplus_{k\in \mathsf{bits}(j)} b_k$, $z'=\bigoplus_{k\in
  \mathsf{bits}(j')} b_k$ and $z''=\bigoplus_{k\in \mathsf{bits}(j')}
b'_k$ and
\begin{align*}
  s_l & \eqdef
  \iforin{i}{[1\ldots n]\setminus k_0}{\irnd {b_i} {\{0,1\}}};~
  \irnd{b_{k_0}}{\{0,1\}} \\
  s_r & \eqdef
  \iforin{i}{[1\ldots n]\setminus k_0}
   {(\irnd {b_i} {\{0,1\}}; \irnd {b'_i} {\{0,1\}})};~
   \irnd {b_{k_0}} {\{0,1\}};~ \irnd {b'_{k_0}} {\{0,1\}} .
\end{align*}
This is proved by coupling the variables of the two programs in an
appropriate way.  We couple the random samplings as follows:

\begin{itemize}
\item for every $k \neq k_0$, we couple $b_k\sidel$ and
      $b_k\sider$ using the identity sampling;
\item we use the \textsc{Rnd-R} rule for $b'_k\sider$ for every $k \neq k_0$;
\item we couple $b_{k_0}\sidel$ and $b'_{k_0}\sider$ with the
  bijection which ensures
  $$b_{k_0}\sidel \oplus \left(\bigoplus_{k\in \mathsf{bits}(j') \setminus \{k_0\}} b_k\sidel\right) =
  b'_{k_0}\sider \oplus \left(\bigoplus_{k\in \mathsf{bits}(j') \setminus \{k_0\}} b'_k\sider\right) .
  $$
\end{itemize}
Putting everything together, the final proof obligation follows from the
algebraic properties of $\oplus$.

%% $\psi\Rightarrow \phi$, where
%% $$
%% \begin{array}{lcl}
%% \psi & \eqdef & (\forall k\neq k_0.~ b_k\sidel = b_k\sider) \land \varphi\\
%% \phi & \eqdef & z\sidel= a \land
%%   z'\sidel =a' \iff z\sider= a \land
%%   z''\sider =a'
%% \end{array}
%% $$
%% %
%% This follows from the algebraic properties of $\oplus$.

\subsection[k-wise Independence]{$k$-wise Independence}
\label{sec:kwindep}

The previous example can be generalized to achieve $k$-wise
independence for general $k$. Suppose we wish to generate $n$ random
variables that are $k$-wise independent. We will work in
$\mathbb{Z} /p\mathbb{Z}$, the field of integers modulo a prime $p$,
such that $k\leq p$.  Let $a_0, \dots, a_{k - 1}$ be drawn uniformly
at random from $\mathbb{Z}/p\mathbb{Z}$ and define the family of
random variables for every $m \in \{1,\dotsc, n\}$:
\[
  x_m = \sum_{j = 0}^{k - 1} a_j \cdot m^j ,
\]
where we take $0^0 = 1$ by convention. The corresponding code is given in 
\cref{fig:kwindep}. Then, we can show that any
collection of $k$ distinct variables $\{ x_i \}_i$ is independent.

\begin{wrapfigure}[9]{l}{6cm}
  $$\boxed{
    \begin{array}{l}
    \ifor{i}{1}{n}{} \\
    \quad \irnd {a_i} {\mathbb{Z}/p\mathbb{Z}}; \\
    \ifor{m}{0}{n-1}{} \\
    \quad \iass {x_m}{0} ;\\
    \quad \ifor{j}{0}{k-1}{}\\
    \quad \quad \iass {x_{m}} {a_j\cdot m^j};
    \end{array}
  }
  $$
\caption{$k$-wise independence}\label{fig:kwindep}
\end{wrapfigure}

For simplicity, we will show that the first $k$ elements $x_0, \dots,
x_{k-1}$ are uniform, and hence independent. Let $v_0,\ldots,v_{k -
  1}\in \Zn{n}$ be arbitrary elements of the field. Then the
probability that $(x_0,\ldots,x_{k - 1}) =(v_0,\ldots, v_{k - 1})$ is
equal to $p^{-k}$. Indeed, the equations
\[
\left\{ \begin{array}{@{}r@{~}c@{~}l}
v_0 & = & a_0  \\
\vdots & & \vdots \\
v_{k - 1} & = & \sum_{j = 0}^{k - 1} a_j \cdot (k - 1)^j
\end{array} \right. \]
define a system of linear equations with variables $a_0, \dots, a_{k-1}$.  By
basic linear algebra the system of equations has a unique solution for the
variables $a_0, \dots, a_{k - 1}$,\footnote{%
  Let the Vandermonde matrix $V(1,\dotsc, k-1)=
  \left(i^{j-1}\right)_{i,j}$ be
  $$
  V(1,\dotsc, k-1) \cdot (a_0, \dotsc, a_{k-1})^T = (v_0, \dotsc, v_{k-1})^T.
  $$
  The system of equations has a unique solution if and only if the matrix
  $V(1,\dotsc, k-1)$ is invertible in the space of matrices over $\Zn{n}$, which
  happens if and only if its determinant is non-zero mod $n$.  Expanding,
  $$
  \textrm{det}(V(1,\dotsc, k-1)) = \prod_{i\neq j} (i-j) = \prod_{i=2}^{k-1} 
  i!
  $$
  Note that $p$ does not divide the determinant by Gauss' lemma, since $n$ can't
  divide any of the terms $i!$ for any $i<n$. Therefore, the system of equations
  has a unique solution.}
which we denote $(v_0^*,\ldots, v_{k - 1}^*)$. Now consider the \Sprhl judgment
that establishes uniformity:
$$
\Equiv{s}{s}{\top}{x_0\sidel= v_0 \land \cdots \land x_{k - 1} \sidel= v_{k 
    -
    1}
  \iff x_0\sider= w_0 \land \cdots \land x_{k - 1} \sider= w_{k - 
1}}
$$
By applying (relational) weakest precondition on the deterministic
fragments of the program, the judgment is reduced to
$$\Equiv{s}{s}{\top}{a_0\sidel= v_0^* \land \cdots \land a_{k-1}
  \sidel= v_{k - 1}^* \iff a_0\sider= w_0^* \land \cdots \land a_{k-1}
  \sider= w_{k - 1}^*}$$
We then repeatedly apply the rule for random sampling, with the
permutation on $\mathbb{Z}/p\mathbb{Z}$ that exchanges $(v_i^*,
w_i^*)$.

\section{Conditional Independence}\label{sec:cindep}

Finally, we consider how to show conditional independence. Recall that the
conditional probability $\pr{x\sim\mu}{A\mid B}$ is defined when
$\pr{x\sim\mu}{B} \neq 0$ and satisfies $\pr{x\sim\mu}{A\mid B} \eqdef
\frac{\pr{x\sim \mu}{A \land B}}{\pr{x\sim\mu}{B}}$.
\begin{definition}
Let $X=\{ x_1,\ldots, x_n\}$ be a set of program variables of types
$A_1,\ldots, A_n$ and let $E$ be an event. We say that $X$ is
independent conditioned on $E$ in a
distribution $\mu\in\Dist(\Mem)$ iff for every $(a_1,\ldots, a_n) \in A_1\times \ldots
\times A_n$:
$$\pr{\mu}{\left. \bigwedge_{1\leq i \leq n} x_i =  a_i ~\right|~ E} =
\prod_{1\leq i \leq n} \pr{\mu}{x_i = a_i \mid E} .$$
(For this definition to make sense, we are implicitly assuming that
$\pr{\mu}{E}\neq 0$.)
\end{definition}

The following lemma unfolds the definition of conditional independence
and is useful for the characterization of the next section.
\begin{lemma}
A set of variables $X$ is independent conditioned on an event $E$
in $\mu$ iff for every $a_1\in A_1$, \ldots, $a_n\in A_n$:
  $$%\begin{array}{l}
  \pr{\mu}{\bigwedge_{1\leq i \leq n} x_i = a_i\land E}
  \cdot \left(\pr{\mu}{E}\right)^{n-1} %\\
  = \prod_{1\leq i \leq n} \pr{\mu}{x_i = a_i \land E} .
  %\end{array}
 $$

\end{lemma}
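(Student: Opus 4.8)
The plan is to prove the equivalence by a direct algebraic manipulation, simply unfolding the definition of conditional probability and clearing denominators. Recall that conditional independence of $X$ on $E$ asserts, for every $(a_1,\dots,a_n)$,
\[
  \pr{\mu}{\textstyle\bigwedge_{1\leq i \leq n} x_i = a_i \mid E}
  = \prod_{1\leq i \leq n} \pr{\mu}{x_i = a_i \mid E}.
\]
First I would rewrite each conditional probability using the defining identity $\pr{\mu}{A\mid B} = \pr{\mu}{A\land B}/\pr{\mu}{B}$, which is available because we assume $\pr{\mu}{E}\neq 0$. The left-hand side becomes $\pr{\mu}{\bigwedge_{i} x_i = a_i \land E}/\pr{\mu}{E}$, while on the right-hand side each of the $n$ factors $\pr{\mu}{x_i = a_i \mid E}$ contributes one copy of $1/\pr{\mu}{E}$, so the product becomes $\left(\prod_{i}\pr{\mu}{x_i = a_i \land E}\right)/(\pr{\mu}{E})^n$.

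Next I would clear denominators by multiplying both sides by $(\pr{\mu}{E})^n$. On the left one copy of $\pr{\mu}{E}$ in the denominator cancels, leaving a surviving factor $(\pr{\mu}{E})^{n-1}$; on the right the denominator disappears entirely. This yields exactly the claimed identity
\[
  \pr{\mu}{\textstyle\bigwedge_{1\leq i \leq n} x_i = a_i \land E}
  \cdot (\pr{\mu}{E})^{n-1}
  = \prod_{1\leq i \leq n} \pr{\mu}{x_i = a_i \land E}.
\]

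The only point requiring care is the direction of the equivalence: since $\pr{\mu}{E}\neq 0$ by the standing assumption, multiplying or dividing by powers of $\pr{\mu}{E}$ is an invertible operation, so every step above is reversible and the two statements are genuinely equivalent rather than related by a one-directional implication. This reversibility is precisely where the hypothesis $\pr{\mu}{E}\neq 0$ is used, and beyond correctly bookkeeping the powers of $\pr{\mu}{E}$ there is no substantive obstacle.
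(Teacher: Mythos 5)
Your proof is correct and is exactly the argument the paper intends: the paper states this lemma without proof, describing it simply as an ``unfolding'' of the definition of conditional independence, and your direct algebraic manipulation---rewriting each conditional probability as a ratio and clearing denominators using $\pr{\mu}{E}\neq 0$---is precisely that unfolding, with the reversibility point correctly identified as where the nonvanishing hypothesis is used.
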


\subsection{Characterization}

The characterization of independence based on self-composition can
be extended as follows.

\begin{prop}[Conditional independence by coupling]
The following are equivalent:
\begin{enumerate}
\item for every state $m$, $X$ is independent conditioned on $E$
  in $\dsem{m}{s}$;
\item $\Equiv{\selfcomp{s}{n}}{\selfcomp{s}{n}}{\Eqmem{n}{n}}{
  \left( \phi_1
    \land \mathbb{E}\sidel \right)
  \iff
  \left( \phi_2
    \land \mathbb{E}\sider \right)}$ 
  where
  $\mathbb{E} \eqdef \bigwedge_{1\leq i\leq n} E^i$,
  $\phi_1 \eqdef \bigwedge_{1\leq i\leq n} (x_i^1 \sidel = a_i)$ and
  $\phi_2 \eqdef \bigwedge_{1\leq i\leq n} (x^i_i \sider = a_i)$.
\end{enumerate}
\end{prop}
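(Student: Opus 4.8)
The plan is to mirror the proof of \cref{prop:indep-couple}, reducing the relational judgment to a scalar probability equality and then evaluating that equality through the self-composition property of \cref{prop:self-comp}. Since the precondition $\Eqmem{n}{n}$ forces both initial states to be the common self-composition $\selfcomp{m}{n}$ of a single base state $m$, both copies of $\selfcomp{s}{n}$ produce the same output distribution $\nu \eqdef \dsem{\selfcomp{m}{n}}{\selfcomp{s}{n}}$. By the $\iff$-variant of \cref{l:cplfd}, together with its converse in the style of \cref{l:cpleq}, the judgment in item~2 is valid for all $a_1,\ldots,a_n$ if and only if, for every base state $m$,
\[
  \pr{\nu}{\phi_1 \land \mathbb{E}} = \pr{\nu}{\phi_2 \land \mathbb{E}} .
\]

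First I would evaluate each side by observing that both events factor cleanly across the $n$ copies. On the left, $\phi_1$ constrains only the first copy, so $\phi_1 \land \mathbb{E}$ requires $(\bigwedge_{i} x_i = a_i) \land E$ in copy~$1$ and $E$ alone in copies $2,\ldots,n$; applying \cref{prop:self-comp} gives
\[
  \pr{\nu}{\phi_1 \land \mathbb{E}} =
    \pr{\dsem{m}{s}}{\textstyle\bigwedge_{1\leq i\leq n} x_i = a_i \land E}
    \cdot \left(\pr{\dsem{m}{s}}{E}\right)^{n-1} .
\]
On the right, $\phi_2$ places the constraint $x_i = a_i$ in copy~$i$, so copy~$i$ must satisfy $(x_i = a_i) \land E$, and \cref{prop:self-comp} yields
\[
  \pr{\nu}{\phi_2 \land \mathbb{E}} =
    \prod_{1\leq i\leq n} \pr{\dsem{m}{s}}{x_i = a_i \land E} .
\]
Equating these two expressions produces exactly the characterization of conditional independence supplied by the lemma preceding this proposition (with $\mu = \dsem{m}{s}$), and quantifying over all $m$ and all tuples $(a_1,\ldots,a_n)$ closes the equivalence.

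The main obstacle is the reduction step itself, in particular the converse direction: from the equality $\pr{\nu}{\phi_1\land\mathbb{E}} = \pr{\nu}{\phi_2\land\mathbb{E}}$ one must build a coupling of $\nu$ with itself whose support lies in $(\phi_1\land\mathbb{E})\sidel \iff (\phi_2\land\mathbb{E})\sider$. This relies on the losslessness assumption: since $\wt{\nu}=1$, the complementary masses satisfy $1 - \pr{\nu}{\phi_1\land\mathbb{E}} = 1 - \pr{\nu}{\phi_2\land\mathbb{E}}$, so one can couple the region where the left event holds with the region where the right event holds, and their complements separately, exactly as in \cref{l:cpleq}. The only remaining care is bookkeeping: checking that $\phi_1$ and $\phi_2$ touch the copies in the claimed pattern and that $\mathbb{E} = \bigwedge_{i} E^i$ distributes over the copies, so that \cref{prop:self-comp} applies verbatim to both sides.
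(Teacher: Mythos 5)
Your proposal is correct and follows exactly the route the paper intends: the paper's proof is the single sentence ``similar to the case of independence (\cref{prop:indep-couple})'', and that case proceeds precisely as you do---reduce validity of the judgment to an equality of probabilities via the fundamental lemma of coupling and its converse, factor both sides using \cref{prop:self-comp}, and match the result against the unfolding lemma for conditional independence. Your elaboration of the converse direction (constructing the coupling from equal masses using losslessness) and the bookkeeping of how $\phi_1$, $\phi_2$, and $\mathbb{E}$ distribute over the copies is exactly the detail the paper leaves implicit.
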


\begin{proof}
The proof is similar to the case of independence (\cref{prop:indep-couple}).
\end{proof}

\subsection{Example: Conditional Independence}

\begin{wrapfigure}{l}{4.5cm}
  $$\boxed{
    \begin{array}{l}
    \irnd x \mu; \\
    \irnd y \mu'; \\
    \irnd z \mu''; \\
    \iass w {f(x,y)}; \\
    \iass {w'} {g(y,z)};
    \end{array}
  }
  $$
\caption{Conditional indep.}\label{fig:condindep}
\end{wrapfigure}

We consider a simple example often used to illustrate Bayesian
networks models. Let $x$, $y$, $z$, $w$ and $w'$ be random variables,
where $x$, $y$ and $z$ are sampled from distributions $\mu$, $\mu'$
and $\mu''$ respectively, and $w$ and $w'$ are defined by their
respective assignments. Both $w$ and $w'$ depend on $y$, along with
independent sources of randomness, respectively $x$ and $z$. While $w$
and $w'$ are not independent---they share dependence on $y$---if we
\emph{condition} on a particular value of $y$, then $w$ and $w'$ are
independent.

The code of the corresponding program $s$ is given in 
\cref{fig:condindep}. We
want to show that $w$ and $w'$ are independent conditioned on $y=c$
for every $c$. Using our characterization based on self-composition,
it amounts to proving the following (universally quantified)
\Sprhl judgment:
\[
  \Equiv{\selfcomp{s}{2}}
    {\selfcomp{s}{2}}{\Eqmem{2}{2}}
    {\phi\sidel \iff \psi\sider} \]
where
$\left\{ \begin{aligned}
  \phi & \eqdef  w^1 = a \land {w'}^1= b \land y^1 = c  \land y^2 =c \\
  \psi & \eqdef w^1 = a \land {w'}^2= b \land y^1 = c \land y^2 = c .
\end{aligned} \right.$

The proof proceeds by moving the samplings of $z^1$ and $z^2$ in both programs
to the front of the program, and then swapping samplings in the left program (we
can use the rule \rname{Swap} to reorder the instructions, as the sampling
instructions for $z^1$ and $z^2$ operate on different variables). Then, we
couple $z^1\sidel$ to be equal to $z^2\sider$, and $z^2\sidel$ to be equal to
$z^1\sider$. We apply the identity coupling to all other random samplings.

\section{Formalization}
\EasyCrypt~\citep{Barthe:2011:CRYPTO,FOSAD:BDGKS14} is an interactive
proof assistant that supports reasoning about (relational) properties
of probabilistic programs, using the \Sprhl\ logic. We have applied
\EasyCrypt to the main examples of this paper. For uniformity, it
suffices to establish the required \Sprhl\ judgment; in contrast,
independence via self-composition requires to build the self-composed
program, which we have done manually. The main challenges for the
verification are:
\begin{enumerate}
\item Restructuring the code of the program to make the rules of the
  logic applicable; this is done by applying the equivalent of the
  \rname{Struct} rules.

\item Discovering and establishing the correct proof invariants. The
  current version of \EasyCrypt requires that invariants are produced
  by the users.

\item Building an appropriate coupling, primarily through applying the
  rule for random samplings with carefully chosen bijections.
\end{enumerate}
Our examples are formalized in about 1,000 lines of proof script in
\EasyCrypt.\footnote{%
  Proofs are available at the following link:
  \url{https://gitlab.com/easycrypt/indep}}
The most complex example, and the one where the three challenges are most
pronounced, is the random walk over a cycle. This example is formalized in about
500 lines of \EasyCrypt code, out of which the statement, including the
definition of the program, takes about 50 lines. The remaining 90\% of the
formalization covers the notions used in the proof and the proof itself. 

\section{Conclusion}
We have proposed a new method based on probabilistic couplings for formally
verifying uniformity and independence properties of probabilistic programs. Our
method complements the existing range of techniques for probabilistic reasoning,
and has many potential applications in program verification, security, and
privacy.

\section{Acknowledgments}

We thank the anonymous reviewers for their detailed comments.  This work was
partially supported by NSF grants TC-1065060 and TWC-1513694, by the European
Union's H2020 Programme under grant agreement number ICT-644209, and a
grant from the Simons Foundation ($\#360368$ to Justin Hsu).
\bibliographystyle{abbrvnat} \bibliography{header,main}

\newcommand{\SortNoop}[1]{}
\begin{thebibliography}{22}
\providecommand{\natexlab}[1]{#1}
\providecommand{\url}[1]{\texttt{#1}}
\expandafter\ifx\csname urlstyle\endcsname\relax
  \providecommand{\doi}[1]{doi: #1}\else
  \providecommand{\doi}{doi: \begingroup \urlstyle{rm}\Url}\fi

\bibitem[Barthe et~al.(2009)Barthe, Gr{\'e}goire, and
  {Zanella-B{\'e}guelin}]{BartheGZ09}
G.~Barthe, B.~Gr{\'e}goire, and S.~{Zanella-B{\'e}guelin}.
\newblock \href{http://certicrypt.gforge.inria.fr/2013.Journal.pdf}{Formal
  certification of code-based cryptographic proofs}.
\newblock In \emph{{ACM} {SIGPLAN--SIGACT} {S}ymposium on {P}rinciples of
  {P}rogramming {L}anguages ({POPL}), Savannah, Georgia}, pages 90--101, New
  York, 2009.

\bibitem[Barthe et~al.(2011{\natexlab{a}})Barthe, D'Argenio, and
  Rezk]{BartheDR04}
G.~Barthe, P.~R. D'Argenio, and T.~Rezk.
\newblock Secure information flow by self-composition.
\newblock \emph{Mathematical Structures in Computer Science}, 21\penalty0
  (06):\penalty0 1207--1252, 2011{\natexlab{a}}.

\bibitem[Barthe et~al.(2011{\natexlab{b}})Barthe, Grégoire, Heraud, and
  Zanella-Béguelin]{Barthe:2011:CRYPTO}
G.~Barthe, B.~Grégoire, S.~Heraud, and S.~Zanella-Béguelin.
\newblock \href{http://dx.doi.org/10.1007/978-3-642-22792-9_5}{Computer-aided
  security proofs for the working cryptographer}.
\newblock In \emph{{IACR} International Cryptology Conference {(CRYPTO)}, Santa
  Barbara, California}, volume 6841 of \emph{Lecture Notes in Computer
  Science}, pages 71--90. Springer-Verlag, 2011{\natexlab{b}}.

\bibitem[Barthe et~al.(2014)Barthe, Dupressoir, Grégoire, Kunz, Schmidt, and
  Strub]{FOSAD:BDGKS14}
G.~Barthe, F.~Dupressoir, B.~Grégoire, C.~Kunz, B.~Schmidt, and P.-Y. Strub.
\newblock \href{http://dx.doi.org/10.1007/978-3-319-10082-1_6}{{EasyCrypt}: A
  tutorial}.
\newblock In A.~Aldini, J.~Lopez, and F.~Martinelli, editors, \emph{Foundations
  of Security Analysis and Design VII}, volume 8604 of \emph{Lecture Notes in
  Computer Science}, pages 146--166. Springer-Verlag, 2014.
\newblock ISBN 978-3-319-10081-4.

\bibitem[Barthe et~al.(2015)Barthe, Espitau, Gr{\'e}goire, Hsu, Stefanesco, and
  Strub]{BEGHSS15}
G.~Barthe, T.~Espitau, B.~Gr{\'e}goire, J.~Hsu, L.~Stefanesco, and P.-Y. Strub.
\newblock \href{http://arxiv.org/abs/1509.03476}{Relational reasoning via
  probabilistic coupling}.
\newblock In \emph{International Conference on Logic for Programming,
  Artificial Intelligence and Reasoning (LPAR), Suva, Fiji}, volume 9450 of
  \emph{Lecture Notes in Computer Science}, pages 387--401. Springer-Verlag,
  2015.

\bibitem[Barthe et~al.(2016)Barthe, Gaboardi, Gr{\'{e}}goire, Hsu, and
  Strub]{BGGHS16}
G.~Barthe, M.~Gaboardi, B.~Gr{\'{e}}goire, J.~Hsu, and P.~Strub.
\newblock \href{http://arxiv.org/abs/1601.05047}{Proving differential privacy
  via probabilistic couplings}.
\newblock In \emph{{IEEE} {S}ymposium on {L}ogic in {C}omputer {S}cience
  ({LICS}), New York, New York}, pages 749--758, 2016.

\bibitem[Barthe et~al.(2017)Barthe, Gr{\'e}goire, Hsu, and Strub]{BGHS17}
G.~Barthe, B.~Gr{\'e}goire, J.~Hsu, and P.-Y. Strub.
\newblock \href{http://arxiv.org/abs/1607.03455}{Coupling proofs are
  probabilistic product programs}.
\newblock In \emph{{ACM} {SIGPLAN--SIGACT} {S}ymposium on {P}rinciples of
  {P}rogramming {L}anguages ({POPL}), Paris, France}, 2017.

\bibitem[Broder(1989)]{Broder89}
A.~Z. Broder.
\newblock \href{http://dx.doi.org/10.1109/SFCS.1989.63516}{Generating random
  spanning trees}.
\newblock In \emph{{ACM} {SIGACT} {S}ymposium on {T}heory of {C}omputing
  (STOC), Seattle, Washington}, pages 442--447, 1989.

\bibitem[Chadha et~al.(2007)Chadha, Cruz-Filipe, Mateus, and
  Sernadas]{ChadhaCMS07}
R.~Chadha, L.~Cruz-Filipe, P.~Mateus, and A.~Sernadas.
\newblock Reasoning about probabilistic sequential programs.
\newblock \emph{Theoretical Computer Science}, 379\penalty0 (1--2):\penalty0
  142--165, 2007.

\bibitem[Chatterjee et~al.(2016)Chatterjee, Fu, Novotn{\'{y}}, and
  Hasheminezhad]{DBLP:journals/corr/ChatterjeeFNH15}
K.~Chatterjee, H.~Fu, P.~Novotn{\'{y}}, and R.~Hasheminezhad.
\newblock \href{http://arxiv.org/abs/1510.08517}{Algorithmic analysis of
  qualitative and quantitative termination problems for affine probabilistic
  programs}.
\newblock In \emph{{ACM} {SIGPLAN--SIGACT} {S}ymposium on {P}rinciples of
  {P}rogramming {L}anguages ({POPL}), Saint Petersburg, Florida}, 2016.

\bibitem[Chatterjee et~al.(2017)Chatterjee, Novotn{\'{y}}, and
  Zikelic]{DBLP:journals/corr/ChatterjeeNZ16}
K.~Chatterjee, P.~Novotn{\'{y}}, and D.~Zikelic.
\newblock \href{http://arxiv.org/abs/1611.01063}{Stochastic invariants for
  probabilistic termination}.
\newblock In \emph{{ACM} {SIGPLAN--SIGACT} {S}ymposium on {P}rinciples of
  {P}rogramming {L}anguages ({POPL}), Paris, France}, 2017.

\bibitem[Darvas et~al.(2005)Darvas, H{\"{a}}hnle, and Sands]{DarvasHS05}
{\'{A}}.~Darvas, R.~H{\"{a}}hnle, and D.~Sands.
\newblock A theorem proving approach to analysis of secure information flow.
\newblock In \emph{Security in Pervasive Computing {(SPC)}, Boppard, Germany},
  volume 3450 of \emph{Lecture Notes in Computer Science}, pages 193--209.
  Springer-Verlag, 2005.

\bibitem[den Hartog(2002)]{Hartog:thesis}
J.~den Hartog.
\newblock \emph{Probabilistic extensions of semantical models}.
\newblock PhD thesis, Vrije Universiteit Amsterdam, 2002.

\bibitem[Ferrer~Fioriti and Hermanns(2015)]{ferrer2015probabilistic}
L.~M. Ferrer~Fioriti and H.~Hermanns.
\newblock Probabilistic termination: Soundness, completeness, and
  compositionality.
\newblock In \emph{{ACM} {SIGPLAN--SIGACT} {S}ymposium on {P}rinciples of
  {P}rogramming {L}anguages ({POPL}), Mumbai, India}, pages 489--501, 2015.

\bibitem[Kozen(1985)]{Kozen85}
D.~Kozen.
\newblock A probabilistic {PDL}.
\newblock \emph{J. Comput. Syst. Sci.}, 30\penalty0 (2):\penalty0 162--178,
  1985.

\bibitem[Lindvall(2002)]{Lindvall02}
T.~Lindvall.
\newblock \emph{Lectures on the coupling method}.
\newblock Courier Corporation, 2002.

\bibitem[Morgan et~al.(1996)Morgan, McIver, and Seidel]{Morgan96}
C.~Morgan, A.~McIver, and K.~Seidel.
\newblock Probabilistic predicate transformers.
\newblock \emph{ACM Transactions on Programming Languages and Systems},
  18\penalty0 (3):\penalty0 325--353, 1996.

\bibitem[Pearl and Paz(1985)]{PearlP86}
J.~Pearl and A.~Paz.
\newblock \href{http://ftp.cs.ucla.edu/pub/stat_ser/r53-L.pdf}{Graphoids:
  Graph-based logic for reasoning about relevance relations}.
\newblock Technical report, University of California, Los Angeles, 1985.

\bibitem[Ramshaw(1979)]{Ramshaw79}
L.~H. Ramshaw.
\newblock \emph{Formalizing the Analysis of Algorithms}.
\newblock PhD thesis, Computer Science, 1979.

\bibitem[Rand and Zdancewic(2015)]{RandZ15}
R.~Rand and S.~Zdancewic.
\newblock {VPHL}: A verified partial-correctness logic for probabilistic
  programs.
\newblock In \emph{Conference on the Mathematical Foundations of Programming
  Semantics (MFPS), Nijmegen, The Netherlands}, 2015.

\bibitem[Thorisson(2000)]{Thorisson00}
H.~Thorisson.
\newblock \emph{Coupling, Stationarity, and Regeneration}.
\newblock Springer-Verlag, 2000.

\bibitem[Villani(2008)]{Villani08}
C.~Villani.
\newblock \emph{Optimal transport: old and new}.
\newblock Springer-Verlag, 2008.

\end{thebibliography}

\ifappendix
\appendix

\section{Further Example: Rejection Sampling}

\begin{wrapfigure}{l}{5.3cm}
$$\boxed{\begin{array}{l}
  \iass{b}{\pfalse}; \\
  \iwhile{\neg b}{} \\
  \quad x \rnd {\mathcal{U}_A}; \\
  \quad \iass{b}{P~x}; \\
  % \kreturn~x 
\end{array}}$$
  \caption{Rejection sampling}\label{fig:rej_sampling}
\end{wrapfigure}%

This example is a classic randomized algorithm: given a uniform
distribution $\mathcal{U}_A$ over some finite type $A$, and some
non-empty predicate $P$, its goal is to output a uniformly sampled
value that satisfies $P$. This is formalized by the program from
\cref{fig:rej_sampling}.  Using a straightforward extension of
our characterization (\cref{prop:uni-coupling}), it suffices to prove
that the program outputs a value that satisfies $P$ (which can be done
by reasoning directly on the semantics of the program or using a logic
for sure events) and the (universally quantified) \Sprhl\ judgment:
$$\forall a_1,a_2\in A.~\Equiv{s}{s}{P~a_1\land P~a_2}{ x\sidel=a_1
  \iff x \sider=a_2}$$ 
The proof of the $\Sprhl$ judgment is particularly simple; 
one simply needs to choose
as loop invariant $b\sidel = b\sider \land x\sidel=a_1 \iff
x \sider=a_2$ and use in the rule for random sampling the permutation
$\pi_{a_1,a_2}$ such that $\pi_{a_1,a_2} (a_1)=a_2$,
$\pi_{a_1,a_2}(a_2)=a_1$, and $\pi_{a_1,a_2} (x)=x$ for every $x\notin
\{a_1,a_2\}$.

\fi

\end{document}

%%% Local Variables:
%%% mode: latex
%%% TeX-master: t
%%% End: